\newcommand{\bs}[1]{\boldsymbol{#1}}
\newcommand{\bm}[1]{\textbf{#1}}
\newcommand{\rev}{\backslash}
\newcommand{\sset}[1]{\mathcal{#1}}
\newtheorem{theorem}{\bm{Theorem}}
\newtheorem{mydef}{\bm{Definition}}
\newtheorem{prop}{\bm{Proposition}}
\newcommand{\inlineeqno}[1]{\;\;\;\refstepcounter{equation} (\theequation)\label{#1}}
\begin{document}

\title{Near-Optimal Sparse Sensing for Gaussian Detection with Correlated Observations}

\author{Mario~Coutino,~\IEEEmembership{Student Member,~IEEE,}
        Sundeep~Prabhakar~Chepuri,~\IEEEmembership{Member,~IEEE,}
        and~Geert~Leus,~\IEEEmembership{Fellow,~IEEE}
\thanks{This work is part of the ASPIRE project (project 14926 within the STW OTP programme), which is financed by the Netherlands Organisation for Scientific Research (NWO). Mario Coutino is partially supported by CONACYT. 

All the authors are with the Faculty of Electrical, Mathematics and Computer Science, Delft University of Technology, Delft 2628CD, The Netherlands (e-mail: m.a.coutinominguez@tudelft.nl; s.p.chepuri@tudelft.nl; g.j.t.leus@tudelft.nl).}
}

\markboth{Coutino et al. Near-Optimal Sparse Sensing for Gaussian Detection with Correlated Observations}%
{Shell \MakeLowercase{\textit{et al.}}: Bare Demo of IEEEtran.cls for IEEE Communications Society Journals}

\maketitle

\begin{abstract}
Detection of a signal under noise is a classical signal processing problem. When monitoring spatial phenomena under a fixed budget, i.e., either physical, economical or computational constraints, the selection of a subset of available sensors, referred to as sparse sensing, that meets both the budget and performance requirements is highly desirable. Unfortunately, the subset selection problem for detection under dependent observations is combinatorial in nature and suboptimal subset selection algorithms must be employed. In this work, different from the widely used convex relaxation of the problem, we leverage submodularity, the diminishing returns property, to provide practical near-optimal algorithms suitable for large-scale subset selection. This is achieved by means of low-complexity greedy algorithms, which incur a reduced computational complexity compared to their convex counterparts.
\end{abstract}

\begin{IEEEkeywords}
Greedy selection, sensor selection, sensor placement, sparse sensing, submodular optimization.
\end{IEEEkeywords}

\IEEEpeerreviewmaketitle

\section{Introduction}
\IEEEPARstart{L}{arge} sensor networks are becoming pervasive in our daily life. They are found in monitoring activities, e.g, traffic flow and surveillance, as well as typical signal processing applications such as radar and seismic imaging. The data generated by these networks requires to undergo several processing steps before being used for inference tasks, such as estimation or detection. Due to the increase in the size of the network, managing the data throughput can become a challenging problem in itself. Hence, if a known inference task with fixed performance requirements is kept in mind during the design phase of a sampler, large data reduction benefits can be obtained by optimizing the number of deployed sensors. In realistic setups, the budget for a particular measurement campaign is also constrained, e.g., limited processing power, reduced hardware costs, and physical space restrictions. Therefore, it is of great importance to only deploy the sensors that provide meaningful information to solve the problem at hand. However, there is always a trade-off between the performance and the sparsity of the deployed network when such constraints are enforced. This framework in which a reduced number of sensors is employed for data acquisition is here referred to as \emph{sparse sensing}.

In this work, we are interested in the task of designing structured \emph{sparse samplers} for detecting signals under correlated measurements. In particular, we focus on the detection problem for the case of Gaussian measurements with non-diagonal covariance matrices. Such problems are commonly found in practical applications such as sonar and radar systems~\cite{VT1}, imaging~\cite{seis}, spectrum sensing for cognitive radio~\cite{CR}, and biometrics~\cite{BO}, to list a few. For this purpose, we consider a detection task in which a series of measurements, acquired in a distributed fashion, are gathered at a fusion center, e.g., the main processing unit, to perform a hypothesis test. We restrict ourselves to a binary decision problem, in which the fusion center has to decide between two available states $\{\mathcal{H}_{0},\mathcal{H}_{1}\}$ given the observed data. Following the conventional detection theoretical approach, we provide sparse sampler design strategies for both the Neyman-Pearson and Bayesian  setting. Furthermore, as our main goal is to provide a general and scalable framework capable of dealing with large-scale problems, we focus our attention on fast and near-optimal optimization techniques levering submodularity and greedy heuristics. This approach differs from the current state-of-the-art that is fundamentally based on convex relaxations. Despite the fact that the typical convex relaxations provide approximate solutions to the sensor selection problem, they boil down to semidefinite programs. These problems, albeit being solvable efficiently, are computationally intensive and, for large datasets, do not scale very well. 

\subsection{Prior Art}

The structured sparse sampler design problem consists of selecting the subset of measurements with the smallest cardinality possible such that some prescribed performance requirements are met. This problem is commonly referred to in the literature as sparse sensing or sensor selection~\cite{SS0}. Extensive research has been carried out in the area of sparse sensing for estimation~\cite{SS1}-\cite{SS4} and detection~\cite{P1}-\cite{sub:r15} problems. However, much of this work depends on the convex optimization machinery for optimizing the performance metrics or their surrogates. Current efforts, spanning from the field of operational research and machine learning, have shown that greedy heuristics provide near-optimal solutions, given that the cost to optimize satisfies certain properties~\cite{sub:r10,Sub2}. For these setups, sparse sensing has mostly been studied for estimation purposes, using information theoretic measures such as entropy and mutual information as well as experiment design metrics~\cite{Sub1}-\cite{ssVet3}, which exhibit the property of submodularity~\cite{sub:r4}. Similar to convex/concave functions, submodular set functions have certain properties that make them attractive for optimization. Particularly, under certain conditions, some problems involving submodular set functions allow linear-time optimization algorithms (in the size of the input set).
This fact has been fundamental for designing greedy sampling strategies in large scale problems~\cite{gredNew1}-\cite{sub:r12},~\cite{sub:lchamon1}. 

For the particular case of the detection task, the state-of-the-art structured sparse sampler design framework~\cite{P1}-\cite{sub:r15} aims to optimize surrogate functions of the probability of error for the case of binary hypothesis testing. For Gaussian processes with uncorrelated errors, the sampling problem can be solved \emph{optimally} in linear time. These optimal solutions are possible as it can be shown that maximizing the divergence measures between the probability distributions~\cite{sub:bhatt}, e.g., Kullback-Leibler (KL) divergence, $J$-divergence, or Bhattacharyya distance, is tantamount to optimizing the probability of error~\cite{sub:r15}.
However, when correlated errors are considered, optimizing the divergence measures is not exactly equivalent to optimizing the probability of error. Therefore, only suboptimal solutions can be obtained by maximizing the divergences. Furthermore, even though such divergences are simpler to optimize than the actual error probabilities, the problem remains non-convex in the selection variables. As a result, convex approximations must be performed in order to solve the sensor selection problem, often leading to a semidefinite program. However, despite of being solvable efficiently, these semidefinite programs  are not suitable for large-scale settings where our work takes the greatest interest.

\subsection{Overview and Main Contributions}
We concentrate on fast and near-optimal sparse sampler design for Gaussian detection problems with correlated errors. The typical surrogates for the \emph{probability of miss detection} $P_{\rm{m}}$ in the Neyman-Pearson setting, and the \emph{probability of error,} $P_{e}$, in the Bayesian setting, which are based on divergence measures between the two distributions, are in this work relaxed to provide submodular alternatives capable to tackle the sparse sampler design for large-scale problems. 

The main idea behind this work is to show, that in certain situations, it might be possible to avoid the convex machinery~\cite{sub:r15} to solve the sensor selection problem for detection. This becomes important when large scales are considered and fast algorithms are highly desirable. Unfortunately, as the submodular machinery is mostly suitable for problems where constraints can be described as matroids, i.e., structures that generalize the notion of linear independence in vector spaces, instances with a different kind of constraint, e.g., non-monotone separable constraints, cannot be properly addressed. Therefore, in this work, we mainly focus on cardinality constrained problems. In the following, our main contributions are highlighted.

\begin{itemize}
\item[--] For Gaussian observations with common covariance and distinct means we derive a bound for the approximate submodularity of the signal-to-noise ratio (SNR) set function, which provides grounds for the direct application of a greedy heuristic to maximize this cost set function under certain conditions. For instances where the near-optimality guarantees are weak, we derive a submodular set function surrogate based on the Schur complement. While this surrogate establishes a link with traditional convex relaxations for sparse sensing, it accepts a near-optimal maximization using a greedy algorithm that scales linearly in the number of selected sensors through its recursive description. This method attains results comparable with the ones of convex relaxation, but at a significantly reduced computational complexity. 
\item[--] For Gaussian observations with uncommon covariances and common means we show that the divergences between probability distributions are not submodular. Despite this, we present them as a difference of submodular functions, which can be approximately optimized. In cases where these decompositions are not readily available, we introduce surrogate decompositions based on the Schur complement. This approach provides local optimality guarantees without involving computationally intensive semidefinite programs as in the convex case. 
\end{itemize}

\subsection{Outline and Notation}
The rest of this paper is organized as follows. In Section~\ref{sec:2}, the problem of sparse sampler design for detection is introduced, and the sensor selection metrics for both the Neyman-Pearson and Bayesian setting are discussed. The submodular optimization theory is introduced in Section~\ref{sec:pre}. In Section~\ref{sec:ssum} and Section~\ref{sec:ssuc}, submodular set function surrogates for the selection criteria are derived and a general framework to solve the sparse sampler design for Gaussian measurements is provided. Finally, conclusions are drawn in Section~\ref{sec:conc}.

The notation used in this paper is the following. Upper (lower) bold faces letters are used to define matrices (column vectors). $\mathcal{N}(\cdot,\cdot)$ is reserved to represent a Gaussian normal distribution. The notation $\sim$ is read as ``is distributed according to". $(\cdot)^{T}$ and $(\cdot)^{-1}$ represent transposition and matrix inversion, respectively. $\text{diag}(\cdot)$ refers to a diagonal matrix with its argument on the main diagonal. $\bm{I}$ and $\bm 1$ denote the identity matrix and the all-one vector of appropriate size, respectively. $\det(\cdot)$ and $\log(\cdot)$ are the matrix determinant and natural logarithm, respectively. $\text{tr}\{\cdot\}$ denotes the matrix trace operator. $[\bm x]_i$ and $[\bm X]_{i,j}$ denote the $i$th entry of the vector $\bm x$ and the $(i,j)$ entry of the matrix $\bm X$, respectively. Calligraphic letters denote sets, e.g., $\mathcal{A}$, and the vector $\bm 1_{\sset{A}}$, with $\sset{A}\subseteq\sset{V}$, denotes a vector with ones at the indices given by $\mathcal{A}$, and zeros in the complementary set, $\sset{V}\setminus\sset{A}$.

\section{Problem Statement}
\label{sec:2}
Consider a set $\mathcal{X}=\{x_{1},\ldots,x_{M}\}$ of $M$ candidate measurements. These measurements can be temporal samples of temperature, spatial samples from wavefield measurements, etc. The samples are known to be related to the models
\begin{eqnarray}
\mathcal{H}_{0} : x_{m} \sim p_{m}(x\vert \mathcal{H}_{0}),\; m = 1,2,\ldots, M, \\
\mathcal{H}_{1} : x_{m} \sim p_{m}(x\vert \mathcal{H}_{1}),\; m = 1,2,\ldots, M,
\label{eq:detprob}
\end{eqnarray}
where $p_m(x\vert\mathcal{H}_i)$ for $i=0,1$ denotes the probability density function (pdf) of the $m$th measurement, $x_m$, conditioned on the state $\mathcal{H}_i$. By stacking the elements of $\mathcal{X}$ in a vector $\textbf{x}=[x_1,x_2,\ldots,x_m]^T\in\mathbb{R}^{M}$, the pdf of the measurement set for the hypothesis $\mathcal{H}_i$ is denoted by $p(\textbf{x}\vert\mathcal{H}_i)$.
\begin{table*}[th]
\caption{Summary of divergence measures for Gaussian probability distributions.}
\label{tab:table3}
\centering
\begin{tabular}{ccc}
\hline
\centering
Divergence & Expression  & Setting\\
\hline
& Bhattacharyya & \\
 $\sset{B}(\sset{H}_1\Vert\sset{H}_0) \coloneq $ & $ 
\frac{1}{8}(\bs{\theta}_{1}-\bs{\theta}_{0})^{T}\bs{\Sigma}^{-1}(\bs{\theta}_{1}-\bs{\theta}_{0}) + \frac{1}{2}\log\bigg(\frac{\det(\bs{\Sigma})}{\sqrt{\det(\bs{\Sigma}_{1})\det(\bs{\Sigma}_{0})}}\bigg),\; \bs\Sigma = 0.5(\bs\Sigma_1 + \bs\Sigma_0)$
\inlineeqno{eq:bhatDiv1} & Bayesian\\
\hline
& Kullback-Leibler & \\
$\sset{K}(\sset{H}_1\Vert\sset{H}_0) \coloneq $ & $\frac{1}{2}\bigg(\text{tr}\big(\bs{\Sigma}_{0}^{-1}\bs{\Sigma}_{1}\big) + \big(\bs{\theta}_{1}-\bs{\theta}_{0}\big)^{T}\bs{\Sigma}_{0}^{-1}\big(\bs{\theta}_{1}-\bs{\theta}_{0}\big) - N + \log\big(\det(\bs{\Sigma}_{0})\big) - \log\big(\det(\bs{\Sigma}_{1})\big)
\bigg)
\inlineeqno{eq:kldivdi}$ & Neyman-Pearson\\
\hline
& J-Divergence & \\
$\sset{D}_{\rm{J}}(\sset{H}_0\Vert\sset{H}_1)\coloneq$ & $\mathcal{K}(\mathcal{H}_{1}\Vert\mathcal{H}_{0}) +  \mathcal{K}(\mathcal{H}_{0}\Vert\mathcal{H}_{1})\inlineeqno{eq:jedi}$ & Neyman-Pearson
 \\ \hline
\end{tabular}
\end{table*}

We pose the acquisition of a reduced set $\mathcal{Y}\subseteq\mathcal{X}$ consisting of $K$ measurements as a linear sensing problem where the rows of the sensing matrix yield a subset of rows of an identity matrix. The selected rows, indexed by $\mathcal{A}$, of the identity matrix are defined by a vector $\textbf{w}$ whose entries belong to a binary alphabet set, i.e.,
\begin{equation}
\textbf{w} = [w_{1},w_{2},\ldots,w_{M}]^{T}\in\{0,1\}^{M},
\end{equation}
where $w_{m} = 1\,(0)$ indicates that the $m$th measurement is (not) selected. The subset of rows is then defined as
\begin{equation}
\mathcal{A} := \{m\; |\; w_{m} = 1, 1 \leq m \leq M\}.
\end{equation}
The acquisition scheme can be formally  expressed using the following linear model
\begin{equation}
\textbf{y}_\mathcal{A} = \bs{\Phi}_\mathcal{A}\textbf{x} \in \mathbb{R}^{K},
\label{eq:linModelY}
\end{equation}
where $\textbf{y}_{\mathcal{A}} = [y_{1},y_{2},\ldots,y_{K}]^T$ is the reduced-size measurement vector whose entries belong to the set $\mathcal{Y}\subseteq\mathcal{X}$. The selection matrix $\boldsymbol{\Phi}_{\mathcal{A}}$ is a binary matrix composed of the rows of the identity matrix defined by the set $\mathcal{A}$ (non-zero entries of $\textbf{w}$). Even though $K$ is (possibly) unknown to us, we are interested in cases where $K \ll M$, as it is desirable to perform inference on a reduced measurement set. As the notation based on either $\bm{w}$ or $\mathcal{A}$ is considered interchangeable, from this point on, we make no distinction between them.

The subset of measurements $\mathcal{Y}$ is finally used to solve the detection problem~\eqref{eq:detprob} given that the detection performance requirements, for a given application, are met. If the prior hypothesis probabilities are known (Bayesian setting), the optimal detector minimizes the probability of error,
$P_{\rm e} = P(\mathcal{H}_{0}|\mathcal{H}_{1})P(\mathcal{H}_{1}) + P(\mathcal{H}_1|\mathcal{H}_{0})P(\mathcal{H}_{0})$,
where $P(\mathcal{H}_{i}|\mathcal{H}_{j})$ is the conditional probability of deciding $\mathcal{H}_{i}$ when $\mathcal{H}_{j}$ is true and $P(\mathcal{H}_{i})$ is the prior probability of the $i$th hypothesis. For unknown prior hypothesis probabilities (Neyman-Pearson setting), the optimal detector aims to minimize the probability of miss detection (type II error),
$P_{\rm m} = P(\mathcal{H}_{0}|\mathcal{H}_{1}),$
for a fixed probability of false alarm (type I error),
$P_{\rm{fa}} = P(\mathcal{H}_{1}|\mathcal{H}_{0}).$

In a more formal manner, the sensor selection problem for detection, in both settings, is given by
\begin{align}
&\text{{\bf Bayesian}:} \quad \quad \text{arg min}_{\mathcal{A}} \;\; P_{\rm e}(\mathcal{A}) \quad \text{s. to} \; \;|\mathcal{A}| = K, 
\label{eq:intp1} \\
&\text{{\bf Neyman-Pearson}:} \quad \quad  \text{arg min}_{\mathcal{A}}  \;\;P_{\rm m}(\mathcal{A}) \nonumber\\
&\hskip3.4cm\text{s. to} \; |\mathcal{A}| = K, \;\; P_{\rm{fa}}(\mathcal{A}) \leq \lambda, 
\label{eq:intp2}
\end{align}
where $P_{\rm e}(\mathcal{A})$, $P_{\rm m}(\mathcal{A})$ and $P_{\rm{fa}}(\mathcal{A})$ denote the error probabilities due to the measurement selection defined by the set $\mathcal{A}$, and $\lambda$ the prescribed accuracy of the system. 

As for the most general case, the performance metrics in~\eqref{eq:intp1} and~\eqref{eq:intp2} are not easy to optimize numerically, we present alternative measures that can be used as direct surrogates to solve the optimization problems ~\eqref{eq:intp1} and~\eqref{eq:intp2}.
Here, we focus on metrics which provide a notion of distance between the hypotheses under test. That is, we are interested in \emph{maximizing} the distance between two distinct probability distributions $p(\bm{y}_{\mathcal{A}}|\mathcal{H}_{i})$ and $p(\bm{y}_{\mathcal{A}}|\mathcal{H}_{j})$ using a divergence measure $\sset{D}(p(\bm{y}_{\mathcal{A}}|\mathcal{H}_{i})\Vert p(\bm{y}_{\mathcal{A}}|\mathcal{H}_{j}))\in\mathbb{R}_{+}$. They lead to tractable optimization methods and, in some particular cases such as for independent observations under uncorrelated Gaussian noise, they result in an optimal solution. A summary of the divergences, for Gaussian probability distributions, $\mathcal{N}(\bs{\theta}_i,\bs{\Sigma}_i)$, between the different hypotheses under test employed in this work is shown in Table~\ref{tab:table3}. Here, $\bs{\theta}_i$ and $\bs{\Sigma}_i$ denote the mean vector and the covariance matrix of the $i$th distribution, respectively. For a more detailed treatment of these divergence measures and their suitability for sensor selection, the reader is referred to~\cite{SS0},~\cite{sub:shannon}-\cite{sub:kullbck} and the references therein.

Using these divergence measures, the relaxed formulation of the sparse sensing problems~\eqref{eq:intp1} and~\eqref{eq:intp2} can be stated, respectively, as cardinality constraint (P-CC) and detection performance constraint (P-DC) problems:
\begin{align}
&\text{{\bf P-CC}:} \quad \quad {\text{arg max}}_\mathcal{A} \;\; f(\mathcal{A}) \quad \text{s. to}  \;\;|\mathcal{A}| = K; 
\label{eq:p1} \\
&\text{{\bf P-DC}:} \quad \quad  {\text{arg min}}_\mathcal{A} \;\;\;|\mathcal{A}| \quad \text{s. to} \;\; f(\mathcal{A}) \geq \lambda, 
\label{eq:p2}
\end{align}
%
%
where $f(\mathcal{A})$ is one of the divergence measures, $\lambda$ is the prescribed accuracy and $K$ is the cardinality of the selected subset of measurements. 
For the sake of exposition, in this paper we mainly focus on cardinality constraints (i.e., a uniform matroid constraint). However, the methods presented in this work can be easily extended to budget functions, expressed as constraints, representable by other kinds of matroids~\cite{matroid}.
\section{Preliminaries}
\label{sec:pre}
In this section, some preliminaries about submodularity are provided. The main definitions and theorems related to submodular set functions use throughout the work are presented. 
\subsection{Submodularity}
In many engineering applications we encounter the \emph{diminishing returns} principle. That is, the gain of adding new information, e.g., a data measurement, to a large pool of measurements is smaller than the gain of adding the same piece of information to a smaller pool of measurements. This notion is mathematically represented by the next definition.
\begin{mydef}{\textnormal{\bm{(Submodularity)}}} Let $\mathcal{V}=\{1,2,\ldots,M\}$ refer to a ground set, then $f:2^{\vert\mathcal{V}\vert}\rightarrow\mathbb{R}$ is said to be submodular, if for every $\mathcal{A}\subseteq\mathcal{B}\subset\mathcal{V}$ and $v\in\mathcal{V}\setminus\mathcal{B}$ it holds that
\begin{equation}
f(\mathcal{A}\cup\{v\}) - f(\mathcal{A}) \geq f(\mathcal{B}\cup\{v\}) - f(\mathcal{B}).
\label{eq:subDef}
\end{equation}
\end{mydef}

Similar to convex functions, submodular set functions have certain properties that make them convenient to optimize. For example, the unconstrained minimization of general submodular functions can be done in polynomial time~\cite{sub:r3} with respect to the size of the ground set $\vert\sset{V}\vert$.

Even though the maximization of general submodular set functions is an NP-hard problem, Nemhauser et al.~\cite{sub:r9} have shown that for the maximization of a non-decreasing submodular set function $f$, with $f(\emptyset) = 0$, the simple \emph{greedy} procedure presented in Algorithm~\ref{al:greedy} finds a solution which provides at least a constant fraction $(1-1/e)\approx 0.63\%$ of the optimal value. In this context, a set function $f:2^{\vert\sset{V}\vert}\rightarrow\mathbb{R}$ over a ground set $\sset{V}$ is considered non-decreasing if and only if $f(\sset{B}) \geq f(\sset{A})$ holds for all sets $\sset{A}\subseteq\sset{B}\subseteq\sset{V}$.

Using similar arguments, Krause et al.~\cite{sub:r10} extended the near-optimality of the greedy heuristic for \emph{approximately} submodular set functions or $\epsilon$-submodular set functions:
\begin{mydef}\textnormal{\bm{(}$\bs{\epsilon}$-\bm{Submodularity)}~\cite{sub:r10}} A set function $f:2^{\vert\sset{V}\vert}\rightarrow\mathbb{R}$ defined over a ground set $\sset{V}$, is approximately submodular with constant $\epsilon$ or $\epsilon$-submodular, if for all sets $\sset{A}\subseteq\sset{B}\subset{V}$, and $v\in\sset{V}\rev\sset{B}$ it holds that
\begin{equation}
f(\sset{A}\cup\{v\}) - f(\sset{A}) \geq f(\sset{B}\cup\{v\}) - f(\sset{B}) - \epsilon.
\end{equation}
\end{mydef}
For $\epsilon$-submodular functions the greedy Algorithm~\ref{al:greedy} provides the following weaker guarantee.
\begin{theorem}\textnormal{\bm{(}$\bs{\epsilon}$-\bm{Near-Optimality)}}~\normalfont{\cite{sub:r10}} Let $f:2^{\vert\sset{V}\vert}\rightarrow\mathbb{R}$ be a normalized, i.e., $f(\emptyset) = 0$, non-decreasing, $\epsilon$-submodular set function defined over a finite ground set $\sset{V}$. Let $\sset{G}$ be the set of $K$ elements chosen by Algorithm~\ref{al:greedy}. Then,
\begin{equation}
f(\sset{G}) \geq \bigg(1 - \frac{1}{e}\bigg)f(\mathcal{A}_{\rm{opt}})-K\epsilon,
\label{eq:e}
\end{equation}
where $\mathcal{A}_{\rm{opt}} := \underset{\sset{A}\subseteq\sset{V},\vert\sset{A}\vert=K}{\text{arg max}}\;f(\sset{A})$ is the optimal set.
\label{th:e}
\end{theorem}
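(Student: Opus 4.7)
The plan is to adapt the classical Nemhauser-Wolsey-Fisher argument for greedy submodular maximization, carefully tracking how the slack constant $\epsilon$ propagates. Let $\sset{G}_i = \{g_1,\ldots,g_i\}$ denote the greedy set after $i$ iterations (so $\sset{G}_0 = \emptyset$ and $\sset{G} = \sset{G}_K$), and write $\sset{A}_{\rm opt} = \{a_1,\ldots,a_K\}$. Define the suboptimality gap $\delta_i := f(\sset{A}_{\rm opt}) - f(\sset{G}_i)$. The target is an upper bound $\delta_K \leq f(\sset{A}_{\rm opt})/e + K\epsilon$, which on rearrangement is exactly \eqref{eq:e}.

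The crux is a one-step recursion linking $\delta_{i+1}$ to $\delta_i$. Using monotonicity, $\delta_i \leq f(\sset{A}_{\rm opt}\cup\sset{G}_i) - f(\sset{G}_i)$, and I would telescope the right-hand side as $\sum_{j=1}^{K}\bigl[f(\sset{G}_i\cup\{a_1,\ldots,a_j\}) - f(\sset{G}_i\cup\{a_1,\ldots,a_{j-1}\})\bigr]$. Each term is a marginal gain on the superset $\sset{G}_i\cup\{a_1,\ldots,a_{j-1}\} \supseteq \sset{G}_i$, so $\epsilon$-submodularity bounds it by $f(\sset{G}_i\cup\{a_j\}) - f(\sset{G}_i) + \epsilon$. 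Summing accumulates exactly $K\epsilon$ of slack. Next, the greedy selection rule guarantees $f(\sset{G}_i\cup\{a_j\}) - f(\sset{G}_i) \leq f(\sset{G}_{i+1}) - f(\sset{G}_i)$ for every $j$, so the telescoped sum is at most $K[f(\sset{G}_{i+1}) - f(\sset{G}_i)] + K\epsilon = K(\delta_i - \delta_{i+1}) + K\epsilon$. Rearranging yields the affine recursion $\delta_{i+1} \leq (1 - 1/K)\,\delta_i + \epsilon$.

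Unrolling from $i=0$ to $i=K-1$ gives $\delta_K \leq (1 - 1/K)^K \delta_0 + \epsilon\sum_{j=0}^{K-1}(1-1/K)^{j}$. Since $f(\emptyset)=0$, we have $\delta_0 = f(\sset{A}_{\rm opt})$; combining the standard estimate $(1 - 1/K)^K \leq 1/e$ with the elementary bound $\sum_{j=0}^{K-1}(1-1/K)^{j} \leq K$ produces $\delta_K \leq f(\sset{A}_{\rm opt})/e + K\epsilon$, which is equivalent to the claim.

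The main obstacle is bookkeeping the slack: a careless pass risks amplifying $\epsilon$ by a factor of $K$ inside the recursion (yielding $K^2\epsilon$) or applying $\epsilon$-submodularity in the wrong direction. The key is to invoke the $\epsilon$-submodularity inequality with the \emph{smaller} set being $\sset{G}_i$ itself in each telescoping step, so that after the greedy replacement the recursion stays affine with constant additive $\epsilon$; the geometric sum in the unrolling then contributes only the final factor of $K$, giving the tight $-K\epsilon$ term in \eqref{eq:e} rather than a worse quadratic dependence.
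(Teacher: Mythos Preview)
Your argument is correct: the telescoping via monotonicity, the application of $\epsilon$-submodularity with the smaller set $\sset{G}_i$, the greedy replacement, the affine recursion $\delta_{i+1}\le(1-1/K)\delta_i+\epsilon$, and the unrolling all go through as you describe, and the geometric sum indeed contributes at most $K\epsilon$. One minor point worth making explicit is the case $a_j\in\sset{G}_i$, where the $\epsilon$-submodularity hypothesis (which requires $v\notin\sset{B}$) is not literally applicable; but there both marginal gains vanish and the inequality holds trivially, so the bookkeeping is unaffected.

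Note, however, that the paper does \emph{not} supply its own proof of this theorem: it is stated with a citation to~\cite{sub:r10} (Krause, Singh, and Guestrin) and used as a black box. Your proof is precisely the standard adaptation of the Nemhauser--Wolsey--Fisher argument that underlies that reference, so there is nothing to compare against within the present paper.
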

\setlength{\textfloatsep}{5pt}
\begin{algorithm}[t]
\caption{\textsc{Greedy Algorithm.}}
\SetAlgoLined
\KwResult{$\sset{A} : |\sset{A}| = K$}
 initialization $\sset{A} = \emptyset$, $k = 0$\;
 \While{$k < K$}{
  $a^{*} = \underset{a\not\in \sset{A}}{\text{arg max}}\;f(\sset{A}\cup \{a\})$\;
  $\sset{A} = \sset{A}\cup \{a^{*}\}$\;
  $k = k + 1$\;
 }
 \label{al:greedy}
\end{algorithm}

The result in Theorem~\ref{th:e} implies that for small $K\epsilon$, Algorithm~\ref{al:greedy} provides a good approximate solution for the maximization under cardinality constraints. As in practice it is observed that the lower bound from~\cite{sub:r9} is not tight, i.e., the greedy method performs much better than the lower bound~\cite{sub:lchamon1}, the expression provided for $\epsilon$-submodular set functions in~\eqref{eq:e} is expected to be also a loose bound for the performance of Algorithm~\ref{al:greedy}. In any case, Theorem~\ref{th:e} shows that the degradation on the approximation factor increases as $K$ becomes larger.
\subsection{Difference of Submodular Functions}
A notable result in combinatorial optimization arises from the fact that {\it any} set function can be expressed as a difference of two submodular set functions~\cite{sub:r11}. Therefore, the optimization problem
\begin{equation}
\underset{\sset{A}\subseteq\sset{V}}{\text{max}}\;f(\sset{A}) \equiv \underset{\sset{A}\subseteq\sset{V}}{\text{max}}\;[g(\sset{A}) - h(\sset{A})],
\label{eq:ds_max}
\end{equation}
where the cost set function $f:2^{\vert\sset{V}\vert}\rightarrow\mathbb{R}$ is expressed as the difference of two set functions $g:2^{\vert\sset{V}\vert}\rightarrow\mathbb{R}$ and $h:2^{\vert\sset{V}\vert}\rightarrow\mathbb{R}$, defined over a ground set $\sset{V}$ is, in general, NP-hard. Recent results from Iyer et al.~\cite{sub:r12} show that the general case of this problem is multiplicatively inapproximable. However, in this work we motivate the usage of practical methods, employing well-designed heuristics, to obtain good results when solving large-scale real-world problems.

Firstly, let us consider a heuristic from convex optimization for approximating the problem of minimizing the difference of convex functions. A typical heuristic is obtained by linearizing one of the convex functions with its Taylor series approximation. With such a linearization, the original nonconvex minimization problem can be transformed into a sequential minimization of a convex plus an affine function. In the literature this method is known as the convex-concave procedure (CCP)~\cite{sub:r13}. Similarly, for maximizing the difference of submodular set functions, it is possible to substitute one of the submodular set functions from~\eqref{eq:ds_max} by its modular upper bound at every iteration as suggested in~\cite{sub:r12}. Algorithm~\ref{al:supsub} summarizes the supermodular-submodular (SupSub) procedure as described in~\cite{sub:r12} when the cardinality of the set is constrained for approximating the solution of~\eqref{eq:ds_max}.
\setlength{\textfloatsep}{5pt}
\begin{algorithm}[t]
 \caption{\textsc{SupSub Procedure}}
\SetAlgoLined
\KwResult{$\sset{A}$}
 initialization $\sset{A}^{0}=\emptyset;\;t = 0$\;
 \While{\text{not converged (i.e., $(\sset{A}^{t+1} \neq \sset{A}^t)$)}}{
	$\sset{A}^{t+1} := \underset{\sset{A},\;\vert\sset{A}\vert = k}{\text{argmax}}\;g(\sset{A})-m_{\sset{A}^t}^{h}(\sset{A})$\;
    $t = t+1$\;
 }
 \label{al:supsub}
\end{algorithm}

In Algorithm~\ref{al:supsub}, at every iteration, a submodular set function is maximized. This is due to the fact that the modular upper bound $m_{\sset{A}^t}^{h}$ of $h$, locally to $\sset{A}^t$~\cite{sub:r9}, preserves the submodularity of the cost. Using the characterization of submodular set functions two tight modular upper bounds can be defined as follows
\begin{align}
m_{\sset A,1}^{h}(\sset{C})\triangleq h(\sset{A}) &- \sum\limits_{j\in\sset{A}\setminus\sset{C}}h(\{j\}\vert\sset{A}\setminus\{j\}) \nonumber\\
&\quad \quad \quad + \sum\limits_{j\in\sset{C}\setminus\sset{A}}h(\{j\}\vert\emptyset),\label{eq:modBndA}\\
m_{\sset A,2}^{h}(\sset{C}) \triangleq h(\sset{A}) &- \sum\limits_{j\in\sset{A}\setminus\sset{C}}h(\{j\}\vert\sset{V}\setminus\{j\}) \nonumber \\
&\quad\quad\quad+ \sum\limits_{j\in\sset{C}\setminus\sset{A}}h(\{j\}\vert\sset{C})\label{eq:modBndB},
\end{align}
where $h(\sset{A}\vert\sset{C}) \triangleq h(\sset{C}\cup\sset{A}) - h(\sset{C})$ denotes the gain of adding $\sset{A}$ when $\sset{C}$ is already selected. In practice, either~\eqref{eq:modBndA} or~\eqref{eq:modBndB} can be employed in Algorithm~\ref{al:supsub} or both can be run in parallel choosing the one that is better. For a more in-depth treatment of these bounds, the reader is referred to~\cite{subbnd}. These bounds follow similar arguments as the ones found in majorization-minimization algorithms~\cite{mmAlg} for general non-convex optimization.

 Although the maximization of submodular functions is NP-hard, Algorithm~\ref{al:greedy} can be used to approximate at each step the maximum of the submodular set function in Algorithm~\ref{al:supsub}. Furthermore, as the problem of submodular maximization with cardinality, matroid and knapsack constraints admits a constant factor approximation, the SupSub procedure can be easily extended to constrained minimization of a difference of two submodular functions. In addition, similar to CCP, the SupSub procedure is guaranteed to reach a local optimum of the set function when the procedure converges~\cite{sub:r12}, i.e., $\sset{A}^{t+1} = \sset{A}^t$.
 
The main reasons to prefer the SupSub procedure over, a possibly, submodular-supmodular (SubSup) procedure, where a modular lower bound of $g(\cdot)$ is used and the inner step consists of the minimization of a submodular function, are its computationally complexity and versatility. Even though unconstrained minimization of submodular set functions can be performed in polynomial time, the addition of constraints to the minimization of submodular set functions renders the problem NP-hard, for which there are no clear approximation guarantees. As a result, the SupSub is often preferred for optimizing differences of submodular functions.

\section{Observations with Uncommon Means}
\label{sec:ssum}
In this section, we illustrate how to design sparse samplers using the criteria presented in Section~\ref{sec:2} for Gaussian observations with uncommon means. This kind of measurements arises often in communications as in the well-studied problem of detecting deterministic signals under Gaussian noise.

Consider the binary signal detection problem in~\eqref{eq:detprob}. Furthermore, let us assume that the pdfs of the observations are multivariate Gaussians with uncommon means and equal covariance matrices. Then, the related conditional distributions, under each hypothesis, are given by
\begin{equation}
\begin{array}{l}
\sset{H}_{0} : \bm{y}_{\mathcal{A}}\sim\sset{N}(\bm{0},\bs{\Sigma}_{\mathcal{A}})\\
\sset{H}_{1} : \bm{y}_{\mathcal{A}}\sim\sset{N}(\bs{\theta}_{\mathcal{A}},\bs{\Sigma}_{\mathcal{A}}),
\end{array}
\label{eq:hp2}
\end{equation}
where $\sset{A}\subseteq\sset{V}$ is the subset of selected sensors from the set of candidate sensors $\sset{V}=\{1,2,\ldots,M\}$, and where $\bs{\theta}_{\sset{A}} = \bs{\Phi}_\sset{A}\bs{\theta}\in\mathbb{R}^{K}$ and $\bs{\Sigma}_{\sset{A}} = \bs{\Phi}_\sset{A}\bs{\Sigma}\bs{\Phi}_\sset{A}^{T}\in\mathbb{R}^{K\times K}$. The mean vector $\bs{\theta}$ and the covariance matrix $\bs{\Sigma}$ are assumed to be known a priori.

By observing the Bhattacharyya distance and the KL divergence in~\eqref{eq:bhatDiv1} and~\eqref{eq:kldivdi}, respectively, it can be seen that for the probability distributions in~\eqref{eq:hp2} such metrics are reduced to the so-called signal-to-noise ratio function
\begin{equation}
s(\sset{A}) = \bs{\theta}_\sset{A}^T\bs{\Sigma}^{-1}_\sset{A}\bs{\theta}_\sset{A}.
\label{eq:csnr}
\end{equation}
Therefore, maximizing the signal-to-noise ratio, $s(\sset{A})$, directly maximizes the discussed divergence measures leading to an improvement in the detection performance. As a result, we are required to solve the following combinatorial problem
\begin{equation}
\underset{\sset{A}\subseteq\sset{V};\vert\sset{A}\vert=K}{\text{arg max}}\;s(\sset{A})
\label{eq:snr}
\end{equation}

Due to the hardness of the problem in~\eqref{eq:snr}, finding its exact solution requires an exhaustive search over ${M}\choose{K}$ possible combinations which for large $M$ rapidly becomes intractable. Simplifications for the problem~\eqref{eq:snr} can be derived using convex optimization~\cite{sub:r15}. Such approaches provide a sub-optimal solution in polynomial time when cast as a semidefinite program (SDP). Even though under the SDP framework, approximate solutions for~\eqref{eq:snr} can be found efficiently, for large-scale problems near-optimal solutions obtained through Algorithm~\ref{al:greedy} are more attractive as they scale linearly in the number of selected sensors, and only depend on the efficient evaluation of the cost function.

\subsection{$\epsilon$-Submodularity of Signal-to-Noise Ratio}
Although the signal-to-noise ratio set function is not submodular, we can try to quantify how far this set function is away from being submodular. For this purpose, we derive a bound for the $\epsilon$-submodularity of the signal-to-noise ratio.

In the following, we present a key relationship between the parameter $\epsilon$ and the conditioning of the covariance matrix $\bs{\Sigma}$ to provide a bound on the approximate submodularity of the signal-to-noise ratio set function. This relation is summarized in the following theorem.

\begin{theorem}
\label{th:epSub}
If $\bs{\Sigma}$ has a condition number $\kappa\coloneq\lambda_{\max}/\lambda_{\min}$, minimum eigenvalue $\lambda_{\min}$, maximum eigenvalue $\lambda_{\max}$, and admits a decomposition $\bs\Sigma = a\textbf{I} + \textbf{S}$ where $a = \beta\lambda_{\min}$ with $\beta\in(0,1)$, then the signal-to-noise ratio set function $s(\mathcal{A})$ is $\epsilon$-approximately submodular with
\begin{equation}
\epsilon \leq 4C_{1}\bigg(a+\frac{\kappa\lambda_{\max}}{\beta}\bigg),
\label{eq:thEpSub}
\end{equation}
where 
$C_{1} = \Vert\bm{S}^{-1}\bs{\theta}\Vert^2$, with $\bs{\theta}$ being the mean vector.
\end{theorem}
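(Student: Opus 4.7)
My plan is to begin from a closed-form expression for the marginal gain obtained via block matrix inversion. Writing $\bs{\Sigma}_{\sset{A}\cup\{v\}}$ as a $2\times 2$ block with the index $v$ isolated and applying the standard Schur-complement inverse formula yields
\begin{equation*}
\Delta_{v}(\sset{A})\;:=\;s(\sset{A}\cup\{v\})-s(\sset{A})\;=\;\frac{r_{v}(\sset{A})^{2}}{c_{v}(\sset{A})},
\end{equation*}
where $r_{v}(\sset{A})=\theta_{v}-\bs{\sigma}_{v\sset{A}}^{T}\bs{\Sigma}_{\sset{A}}^{-1}\bs{\theta}_{\sset{A}}$ is a partial-regression residual and $c_{v}(\sset{A})=\sigma_{vv}-\bs{\sigma}_{v\sset{A}}^{T}\bs{\Sigma}_{\sset{A}}^{-1}\bs{\sigma}_{v\sset{A}}$ is the Schur complement, i.e.\ the conditional variance of the $v$-th coordinate given the indices in $\sset{A}$. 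This quadratic form is manifestly non-negative, so $s$ is monotone non-decreasing with $s(\emptyset)=0$, which puts the hypotheses of Theorem~\ref{th:e} in force once approximate submodularity is established.

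Because $\Delta_{v}(\sset{A})\ge 0$, the $\epsilon$-submodularity inequality $\Delta_{v}(\sset{A})-\Delta_{v}(\sset{B})\ge -\epsilon$ for $\sset{A}\subseteq\sset{B}$ is implied by the strictly stronger uniform bound $\sup_{\sset{B},v\notin\sset{B}}\Delta_{v}(\sset{B})\le \epsilon$. For the denominator in this supremum, Cauchy interlacing applied to $\bs{\Sigma}_{\sset{B}\cup\{v\}}$ gives $c_{v}(\sset{B})\ge \lambda_{\min}(\bs{\Sigma})=a/\beta$, since $1/c_{v}$ is the $(v,v)$-entry of $\bs{\Sigma}_{\sset{B}\cup\{v\}}^{-1}$. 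The substantive step is then to bound $r_{v}^{2}$ in terms of $C_{1}=\|\textbf{S}^{-1}\bs{\theta}\|^{2}$; for this I would exploit the decomposition $\bs{\Sigma}=a\textbf{I}+\textbf{S}$. Since $\bs{\sigma}_{v\sset{B}}=\textbf{s}_{v\sset{B}}$ whenever $v\notin\sset{B}$ and $\bs{\theta}=\textbf{S}\,\textbf{y}$ with $\textbf{y}:=\textbf{S}^{-1}\bs{\theta}$, the residual $r_{v}$ becomes a bilinear expression in $\textbf{y}$ and $\textbf{z}_{\sset{B}}:=\bs{\Sigma}_{\sset{B}}^{-1}\bs{\theta}_{\sset{B}}$, with $\|\textbf{z}_{\sset{B}}\|$ controlled by the chain $\|\bs{\theta}\|\le \lambda_{\max}(\textbf{S})\,C_{1}^{1/2}$ and $\|\textbf{z}_{\sset{B}}\|\le \|\bs{\theta}_{\sset{B}}\|/\lambda_{\min}(\bs{\Sigma})$. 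A weighted Cauchy--Schwarz split $(x+y)^{2}\le(1+\alpha)x^{2}+(1+\alpha^{-1})y^{2}$ with $\alpha$ chosen in terms of $a/\lambda_{\max}$ should isolate one contribution proportional to $a$ and a second proportional to $\lambda_{\max}^{2}/a=\kappa\lambda_{\max}/\beta$, which, combined with the factor-$4$ loss from two applications of the splitting inequality, produces the announced bound.

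The main obstacle will be matching the precise constants of~\eqref{eq:thEpSub}, specifically the combination $a+\kappa\lambda_{\max}/\beta$ rather than a looser expression scaling as $\kappa^{2}$ or $\kappa^{3}$ that a careless $\alpha=1$ split would deliver. Achieving the right balance requires a careful choice of the weight $\alpha$ in the $(x+y)^{2}$ splitting, together with clean bookkeeping that distinguishes $\lambda_{\max}(\textbf{S})$ from $\lambda_{\max}(\bs{\Sigma})$ (they differ by $a$) and invokes Cauchy interlacing on both $\bs{\Sigma}$ and, where convenient, on $\textbf{S}$. Discarding the non-negative $\Delta_{v}(\sset{A})$ on the left side is inherently lossy — this is precisely what produces the $K\epsilon$ degradation in Theorem~\ref{th:e} — but it is what permits the final bound to be expressed as a single global quantity involving $C_{1}$ and the spectrum of $\bs{\Sigma}$, rather than as a sharper but unwieldy per-pair estimate.
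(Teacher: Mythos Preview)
Your route is valid but genuinely different from the paper's. The paper does not work with the Schur-complement marginal gain $\Delta_v(\sset{A})=r_v(\sset{A})^2/c_v(\sset{A})$ at all. Instead it introduces a \emph{modular} comparison function
\[
\hat h(\sset{A})=-\bs{\theta}^{T}\bm S^{-1}\big(a^{-1}\bm I+a^{-1}\mathrm{diag}(\bm 1_{\sset{A}})\big)^{-1}\bm S^{-1}\bs{\theta},
\]
obtained from the SNR expression~\eqref{eq:snrExt} by replacing the inner $\bm S^{-1}$ by $a^{-1}\bm I$. Since $\hat h$ is modular, its four-term second difference vanishes identically, so it suffices to bound $|h(\sset{A})-\hat h(\sset{A})|\le\epsilon'$; the paper controls this via an operator-norm bound on the matrix difference $[a^{-1}\bm I+a^{-1}\mathrm{diag}(\bm 1_{\sset{A}})]^{-1}-[\bm S^{-1}+a^{-1}\mathrm{diag}(\bm 1_{\sset{A}})]^{-1}$ using the matrix inversion lemma and eigenvalue estimates, arriving at $\epsilon'=a+\kappa\lambda_{\max}/\beta$ and hence $\epsilon=4\epsilon'$.

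Your approach, by contrast, discards $\Delta_v(\sset{A})\ge 0$ and bounds $\sup_{\sset{B},v}\Delta_v(\sset{B})$. You worry that a careless Cauchy--Schwarz split will overshoot to a $\kappa^2$ or $\kappa^3$ dependence, but this concern is misplaced: the target~\eqref{eq:thEpSub} is already quite loose. Indeed, since $\bm S$ and $\bs\Sigma$ commute, $s(\sset{V})=\bs\theta^{T}\bs\Sigma^{-1}\bs\theta=(\bm S^{-1}\bs\theta)^{T}\bm S\bs\Sigma^{-1}\bm S(\bm S^{-1}\bs\theta)\le C_1\max_i(\lambda_i-a)^2/\lambda_i\le C_1\lambda_{\max}$, and by monotonicity $\Delta_v(\sset{B})\le s(\sset{V})$. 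Thus your uniform bound is at most $C_1\lambda_{\max}$, which already lies below $4C_1(a+\kappa\lambda_{\max}/\beta)$ without any weighted split. What the paper's modular-comparison argument buys is a technique that would remain meaningful if $s$ were not monotone or not nonnegative; what your route buys is a shorter proof of the theorem as stated, and in fact a tighter constant.
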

\begin{proof}
See Appendix~\ref{ap:esub}.
\end{proof}

From the result in~\eqref{eq:thEpSub}, it is clear that when the condition number of the covariance matrix $\bs{\Sigma}$ is low, the theoretical guarantee in Theorem~\ref{th:e} provides encouraging bounds for the greedy maximization of the signal-to-noise ratio. In this regime, several works have focused on sensor selection in the past. For example, in the limiting case $\bs{\Sigma} = \sigma^2\bm{I}$, where $s(\mathcal{A})$ becomes a modular set function (the expression in~\eqref{eq:subDef} is met with equality), it has been shown that the optimization problem can be solved optimally by sorting~\cite{sub:r15}. Unfortunately, for arbitrary covariance matrices (especially badly conditioned matrices), the $\epsilon$-submodular guarantee can be very loose. In that case, surrogate submodular set functions can be efficiently optimized using Algorithm~\ref{al:greedy} as a fast alternative for performing sensor selection in large-scale problems.

\subsection{Signal-to-Noise Ratio Submodular Surrogate}
Firstly, let us express the covariance matrix $\bs{\Sigma}$ as
\begin{equation}
\bs{\Sigma} = a\bm{I} + \bm{S},
\label{eq:sigma}
\end{equation}
where $a\in\mathbb{R}$ and $\bm S\in\mathbb{R}^{M\times M}$ have been chosen as described in Theorem~\ref{th:epSub}. Combining~\eqref{eq:csnr} and~\eqref{eq:sigma}, it can be shown that the signal-to-noise ratio can be rewritten as~\cite{sub:r15}
\begin{equation}
s(\sset{A}) = \bs{\theta}^{T}\textbf{S}^{-1}\bs{\theta} - \bs{\theta}^{T}\textbf{S}^{-1}\big[\textbf{S}^{-1} + a^{-1}\text{diag}(\bm 1_{\sset{A}})\big]^{-1}\textbf{S}^{-1}\bs{\theta},
\label{eq:snrExt}
\end{equation}
where the non-zero entries of the vector $\bm 1_{\sset{A}}$ are given by the set $\sset{A}$. Then, considering that the signal-to-noise ratio is always non-negative we can use the Schur complement to express this condition as a linear matrix inequality (LMI) in $\bm{w}$,
\begin{equation}
\textbf{M}_{\sset{A}} \coloneqq \begin{bmatrix}
\textbf{S}^{-1} + a^{-1}\text{diag}(\bm 1_{\sset{A}}) & \textbf{S}^{-1}\bs{\theta}\\
\bs{\theta}^{T}\textbf{S}^{-1} & \bs{\theta}^{T}\textbf{S}^{-1}\bs{\theta} 
\end{bmatrix} \succeq 0,
\label{eq:lmisnr}
\end{equation}
which is similar to the LMI found in the convex program in~\cite{sub:r15}. Therefore, we can consider the following optimization problem as an approximation of~\eqref{eq:snr}
\begin{equation}
\begin{aligned}
&\underset{\sset{A}\subseteq\sset{V};\vert\sset{A}\vert=K}{\text{arg max}} & & f(\sset{A})
\end{aligned}
\label{eq:subs2}
\end{equation}
where the cost set function has been defined as
\begin{equation}
f(\sset{A}) \triangleq 
\begin{cases}
	0, & \text{if } \sset{A} = \emptyset \\
	\log\det(\bm{M}_{\sset{A}}), & \text{if } \sset{A} \neq \emptyset
\end{cases}.
\label{eq:subcostdef}
\end{equation}
The normalization of the cost is done to avoid the infinity negative cost due to the logarithm of zero.

In the following, we motivate why~\eqref{eq:subs2} is a good alternative for~\eqref{eq:snr}. First, notice that the determinant of $\textbf{M}_{\sset{A}}$ consists of the product of two terms, where one of them is related to the signal-to-noise ratio $s(\sset{A})$. That is, using the generalization of the determinant for block matrices, we can decompose the determinant of the right-hand-side (RHS) of~\eqref{eq:lmisnr} as
\begin{align}
\text{det}(\textbf{M}_{\sset{A}}) &=\text{det}\begin{bmatrix}
\textbf{A} & \textbf{B}\\
\textbf{C} & \textbf{D}
\end{bmatrix} = \text{det}(\textbf{A})\text{det}(\textbf{D}-\textbf{C}\textbf{A}^{-1}\textbf{B})\\
&=\gamma(\sset{A}) s(\sset{A}),
\label{eq:prod}
\end{align}
where $\gamma(\sset{A}) = \text{det}(\textbf{S}^{-1}+a^{-1}\text{diag}(\bm 1_{\sset{A}}))$ with  $\gamma(\sset{A})> 0$.

From~\eqref{eq:prod} we notice that the determinant of $\textbf{M}_{\sset{A}}$ consists of the product of the signal-to-noise ratio $s({\sset{A}})$, and $\gamma({\sset{A}})$, which is inversely proportional to the loss in signal-to-noise ratio in~\eqref{eq:snrExt}. Furthermore, the determinant of $\textbf{M}_{\sset{A}}$ given by the product form in~\eqref{eq:prod} can be equivalently expressed as
\begin{equation}
\det(\textbf{M}_{\sset{A}}) = C - \bs{\theta}^{T}\bm{S}^{-1}\text{adj}\bigg(\bm{S}^{-1}+a^{-1}\text{diag}(\bm 1_{\sset{A}})\bigg)\bm{S}^{-1}\bs{\theta},
\end{equation}
where $\text{adj}(\bm{A})$ is the adjugate of $\bm{A}$ defined as the transpose of the cofactor matrix of $\bm{A}$, and $C$ is a constant term. Therefore, maximizing~\eqref{eq:prod} effectively maximizes a modified version of~\eqref{eq:snrExt} where the inverse of $\bm{S}^{-1} + a^{-1}\text{diag}(\bm 1_{\sset{A}})$  has been substituted by its adjugate.

In the following, we present a proposition that is required to provide guarantees for near optimality when the proposed submodular cost set function for the combinatorial problem~\eqref{eq:subs2} is maximized.

\begin{prop}\textnormal{\bm{(Monotonicity and Submodularity)}} The cost set function in~\eqref{eq:subcostdef} is a monotone non-decreasing submodular set function.
\label{prop:mon}
\label{prop:subs}
\end{prop}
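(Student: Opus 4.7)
The approach is to express $\bm{M}_\sset{A}$ as a base positive semidefinite matrix augmented by rank-one updates indexed by $\sset{A}$, and then to invoke the classical monotonicity and submodularity of $\log\det$ under such updates.

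First, I would rewrite
\begin{equation}
\bm{M}_\sset{A} \;=\; \bm{M}_\emptyset + a^{-1}\sum_{i\in\sset{A}}\tilde{\bm{e}}_i\tilde{\bm{e}}_i^{T},
\end{equation}
where $\tilde{\bm{e}}_i = [\bm{e}_i^{T},\,0]^{T}\in\mathbb{R}^{M+1}$ is the $i$-th canonical basis vector padded with a zero, and $\bm{M}_\emptyset\succeq 0$ collects the $\sset{A}$-independent blocks. The PSD property of $\bm{M}_\emptyset$ is immediate from the factorization $\bm{M}_\emptyset = \bm{U}\bm{S}^{-1}\bm{U}^{T}$ with $\bm{U} = [\bm{I},\,\bs{\theta}]^{T}$. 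This casts $f$ into the canonical form of $\log\det$ on a PSD seed perturbed by a sum of rank-one updates.

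For monotonicity, I would observe that $\sset{A}\subseteq\sset{B}$ implies $\bm{M}_\sset{B}\succeq\bm{M}_\sset{A}$ in the Loewner order; by Courant--Fischer the eigenvalues of $\bm{M}_\sset{B}$ dominate those of $\bm{M}_\sset{A}$, so $\det(\bm{M}_\sset{B})\geq\det(\bm{M}_\sset{A})\geq 0$. This delivers the non-decreasing property for all non-empty pairs, and the transition from $f(\emptyset)=0$ is handled by the chosen normalization.

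For submodularity, I would apply the matrix determinant lemma on any non-empty $\sset{A}$ to obtain the closed-form marginal gain
\begin{equation}
f(\sset{A}\cup\{v\}) - f(\sset{A}) \;=\; \log\Bigl(1 + a^{-1}\tilde{\bm{e}}_v^{T}\bm{M}_\sset{A}^{-1}\tilde{\bm{e}}_v\Bigr).
\end{equation}
Since $\sset{A}\subseteq\sset{B}$ entails $\bm{M}_\sset{A}\preceq\bm{M}_\sset{B}$ and inversion reverses the order on the PD cone, we have $\bm{M}_\sset{A}^{-1}\succeq\bm{M}_\sset{B}^{-1}$. Hence $\tilde{\bm{e}}_v^{T}\bm{M}_\sset{A}^{-1}\tilde{\bm{e}}_v \geq \tilde{\bm{e}}_v^{T}\bm{M}_\sset{B}^{-1}\tilde{\bm{e}}_v$, and the monotonicity of $\log(1+\cdot)$ yields the required diminishing-returns inequality.

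The principal obstacle is the boundary case $\sset{A}=\emptyset$: $\bm{M}_\emptyset$ is rank-deficient because its Schur complement $\bs{\theta}^{T}\bm{S}^{-1}\bs{\theta} - \bs{\theta}^{T}\bm{S}^{-1}\bm{S}\bm{S}^{-1}\bs{\theta}$ vanishes, which is precisely why $f(\emptyset)$ is set by convention rather than by $\log\det(\bm{M}_\emptyset)=-\infty$. The argument above must therefore be completed by verifying the submodular inequality at $\sset{A}=\emptyset$ directly, which holds under the mild non-degeneracy assumption that a single rank-one update $\tilde{\bm{e}}_v\tilde{\bm{e}}_v^{T}$ breaks the null direction of $\bm{M}_\emptyset$ (equivalently, $[\bs{\theta}]_v\neq 0$), so that $\bm{M}_{\{v\}}\succ 0$ and the marginal-gain formula extends continuously to the full lattice.
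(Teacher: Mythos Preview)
Your proof is correct and follows essentially the same route as the paper: both decompose $\bm{M}_\sset{A}$ as a fixed matrix plus rank-one updates indexed by $\sset{A}$, apply the matrix determinant lemma to obtain the marginal gain in closed form, and then use the reversal of the Loewner order under inversion on the positive-definite cone to establish diminishing returns. You are in fact more careful than the paper in flagging the singularity of $\bm{M}_\emptyset$ and the resulting boundary issue at $\sset{A}=\emptyset$; the paper's proof simply asserts that the relevant matrices are ``invertible by definition'' and does not treat this case separately.
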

\begin{proof}
See Appendix~\ref{ap:mono}.
\end{proof}
By the fact that the cost set function~\eqref{eq:subcostdef} is a normalized, non-decreasing submodular set function,~\eqref{eq:subs2} can be solved near-optimally for any cardinality size $K$ using Algorithm~\ref{al:greedy}. 

\subsection{Recursive Description of Cost Set Function} It is important to remark that most of the claims of scalability in submodular optimization works rely on the linear-time complexity with respect to the cardinality of the selected set. However, this claim might not translate in a fast optimization solver for all problem instances as the evaluation of the set function itself can be computationally expensive, and in certain situations, it is a prohibitive endeavor.

Under this perspective, we demonstrate the suitability of a large-scale optimization of~\eqref{eq:subs2} by showing that it is possible to compute this set function recursively, alleviating the complexity of computing the determinant of an $(M+1)\times (M+1)$ matrix, which in general, has complexity $\sset{O}((M+1)^{3})$. 

{
Let us consider the $k$th step of the greedy algorithm, with $\sset{A}_{k-1}$ denoting the subset of sensors selected upto this point. First recall that the set cost function~\eqref{eq:subcostdef} can be expressed as [cf.~\eqref{eq:prod}]
\begin{equation}
f(\sset{A}_{k}) = \ln(\det(\bm S^{-1} + a^{-1}\text{diag}(\bm 1_{\sset{A}_{k}}))s(\sset{A}_{k})).
\label{eq:costDet}
\end{equation}
By applying the determinant lemma to~\eqref{eq:costDet} we obtain
\begin{eqnarray}
f(\sset{A}_k)  &=& \ln(\det(\bm S^{-1})\det(\bm I + a^{-1}\bm S_{\sset{A}_{k}})s(\sset{A}_{k})),
\label{eq:costDet2}
\end{eqnarray}
where for $\sset{A}_{k} = \{m_1,\ldots,m_k\}$, we have defined $[\bm S_{\sset{A}_k}]_{i,j} = [\bm S]_{m_i, m_j}$. Here, $m_i$ is the  sensor index selected at the $i$th step.
As the $k$th step of the greedy algorithm evaluates the cost set function for the set $\sset{A}_k = \sset{A}_{k-1} \cup \{i\},\;\forall\;i\in\sset{V}\setminus\sset{A}$, in order to find the best sensor to add, the matrix in the second term of~\eqref{eq:costDet2} can be written using the following block structure
\begin{equation}
\bm I + a^{-1}\bm S_{\sset{A}_k} = \left[\begin{array}{c|c}
\bm I + a^{-1}\bm S_{\sset{A}_{k-1}} & \bm s_{\sset{A}_{k}} \\
\hline
\bm s_{\sset{A}_{k}}^T &  1+ \alpha_{\sset{A}_{k}}
\end{array}\right],
\end{equation}
where for $\sset{A}_{k-1}=\{m_1,\ldots,m_{k-1}\}$, we have defined $[\bm s_{\sset{A}_{k}}]_j = [\bm S]_{m_j,i}$, and $\alpha_{\sset{A}_{k}} = [\bm S]_{i,i}$. Therefore, using the property of the determinant for block matrices, we can construct the following recursive evaluation for the cost set function
\begin{equation}
\begin{split}
f(\sset{A}_k) &= \ln(\det(\bm S^{-1})\det(\bm I + a^{-1}\bm S_{\sset{A}_{k-1}})\\
&\times(1+\alpha_{\sset{A}_{k}} - \bm s_{\sset{A}_{k}}^T(\bm I + a^{-1}\bm S_{\sset{A}_{k-1}})^{-1}\bm s_{\sset{A}_{k}})s(\sset{A}_{k})),
\end{split}
\label{eq:recCost}
\end{equation}
where the matrix $\bm I + a^{-1}\bm S_{\sset{A}_{k-1}}$ is fixed for every $i\in\sset{V}\setminus\sset{A}_{k-1}$, and it only has to be updated when the sensor for the $k$th step has been chosen.

From~\eqref{eq:recCost}, the computational advantages during function evaluations are clearly seen. First, computation of the inverse of the matrix $\bm S^{-1}$ is not needed as for any set the term $\det(\bm S^{-1})$ is constant. This contrasts with the convex method from~\cite{sub:r15} which requires the inversion of $\bm S$. Second, the rank-one update of the inverse in~\eqref{eq:recCost} as well as the computation of $s(\sset{A}_k)$ have worst-case complexity $\sset{O}(K^2)$, which implies that the overall complexity of the proposed algorithm is about $\sset{O}(MK^3)$. That is, differently from its convex counterpart which has cubic complexity in the number of available sensors, the proposed method scales linearly with the number of available sensors. 

Furthermore, as seen in~\eqref{eq:recCost} it is possible to generate two solutions by the evaluation of the set cost function: $(i)$ the solution for maximizing greedily the submodular surrogate $f(\cdot)$, and $(ii)$ the solution of maximizing greedily the signal-to-noise ratio, $s(\cdot)$. Therefore, the proposed cost set function is perfectly suitable for large-scale problems, especially for instances with $M \gg K$, where computational complexity is of great importance. In addition, as two solutions can be built simultaneously, the one with the best performance can always be chosen as final solution. In addition, \emph{lazy evaluations}~\cite{ref:lazy}, or stochastic greedy selection~\cite{ref:lazyx2} can be employed to further reduce the number of function evaluations required and still provide similar near-optimality guarantees.
}
\subsection{Numerical Examples}
\begin{figure}
\centering
    \psfrag{Probability of Error}[Bc][c]{\footnotesize{Probability of Error $[P_{\rm{e}}]$}}
    \psfrag{Number of Sensors [k]}[cc][c]{\footnotesize{Number of Sensors $[K]$}}
	\psfrag{Exhaustive Search}{\fontsize{6}{6}\selectfont{Exhaustive Search}}
	\psfrag{Greedy}{\fontsize{6}{6}\selectfont{Greedy SNR}}
	\psfrag{Submodular Surrogate}{\fontsize{6}{6}\selectfont{Submodular Surrogate}}
	\psfrag{Convex Method}{\fontsize{6}{6}\selectfont{Convex Method}}
	\psfrag{Worst Subset}{\fontsize{6}{6}\selectfont{Worst Subset}}		
  \includegraphics[width=\columnwidth, height=2in]{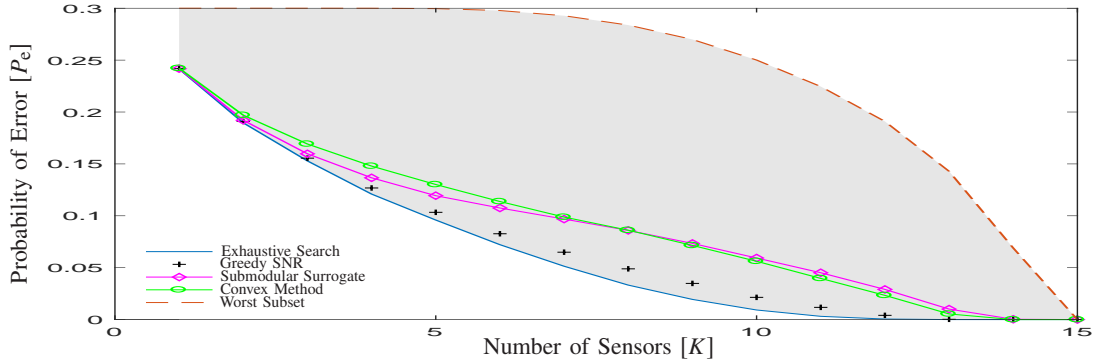}
\caption{Bayesian probability of error $P_{\rm{e}}$ for~\eqref{eq:hp2} with different subset sizes $K$ when choosing from $M=15$ available sensors. The probability of error for any random subset of $K$ sensors will be in the shaded region of the plot.}
  \label{fig:snrSmallSet}
\end{figure}
\begin{figure*}
\centering
\begin{subfigure}[t]{.3\textwidth}
  \centering
  \psfrag{Probability of Error}[Bc][c]{\scriptsize{Probability of Error $[P_{\rm{e}}]$}}
    \psfrag{Number of Sensors [k]}[cc][c]{\scriptsize{Number of Sensors $[K]$}}
	\psfrag{Greedy}{\fontsize{6}{6}\selectfont{Greedy SNR}}
	\psfrag{Submodular Surrogate}{\fontsize{6}{6}\selectfont{Submodular Surrogate}}
	\psfrag{Convex Method}{\fontsize{6}{6}\selectfont{Convex Method}}
  \includegraphics[width=.95\textwidth, height = 1.7in]{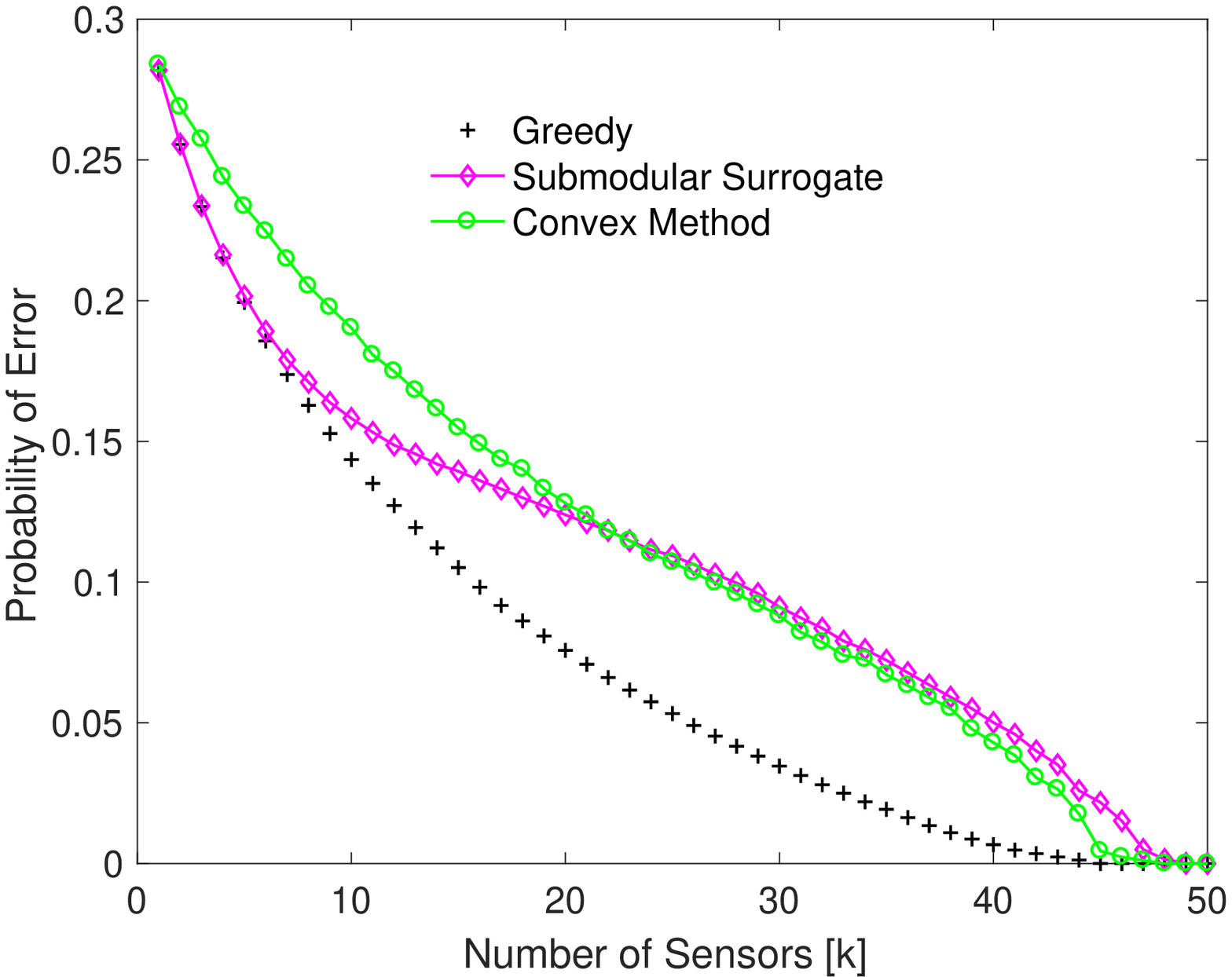}
  \caption{}
  \label{fig:snrLargeSet}
\end{subfigure}%
\begin{subfigure}[t]{0.3\textwidth}
	\centering
	\psfrag{a}[Bc][c]{\scriptsize{Percentage of Occurrence $[\%]$}}
	\psfrag{Ratio of the Solution X/CVX}[cc][c]{\scriptsize{Ratio of the solution X/Convex}}
	\psfrag{Greedy}{\fontsize{6}{6}\selectfont{Greedy SNR}}
	\psfrag{Submodular Surrogate}{\fontsize{6}{6}\selectfont{Submodular Surrogate}}
	\includegraphics[width=0.95\textwidth, height = 1.7in]{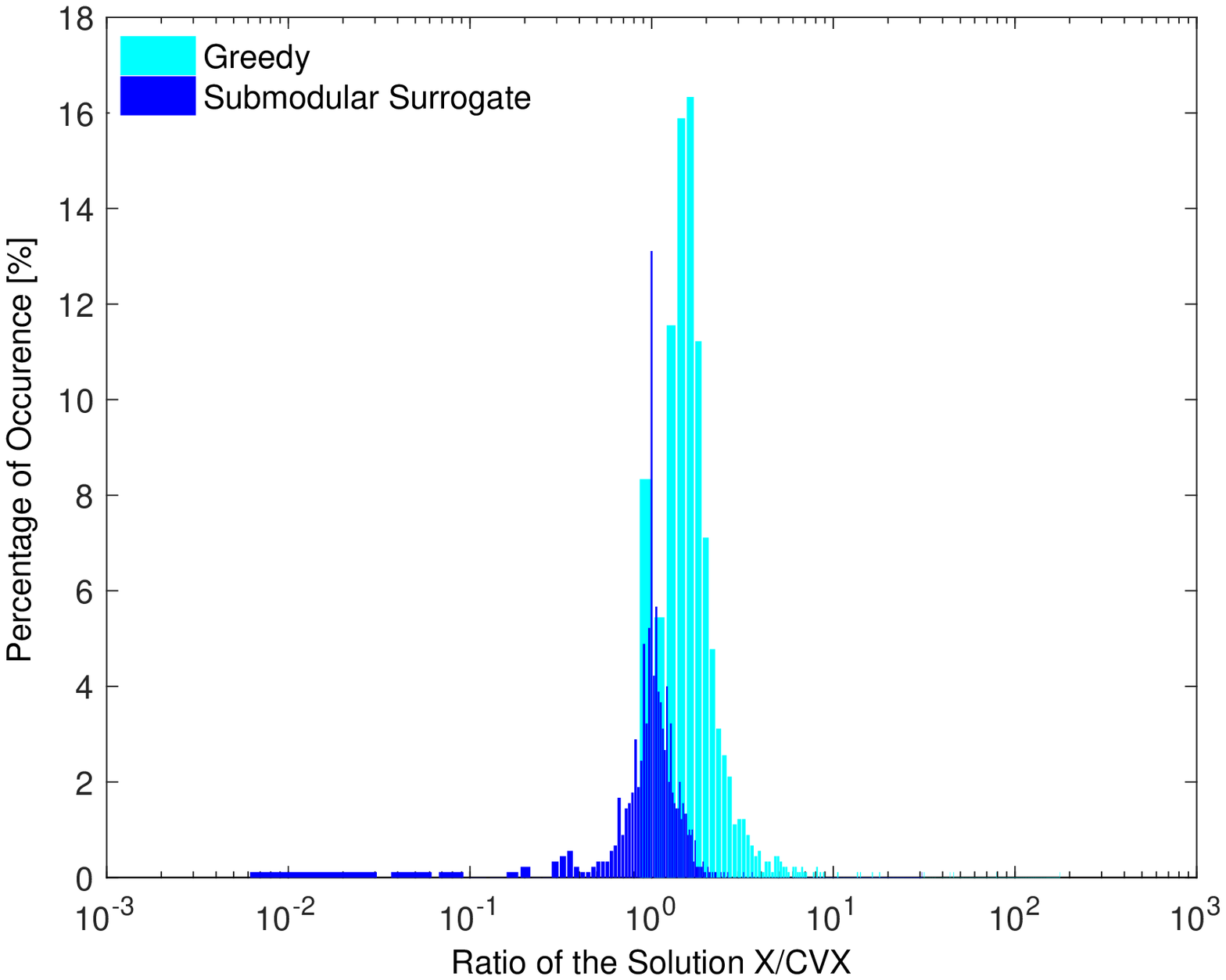}
	\caption{}
	\label{fig:snrHist}
	\end{subfigure}%
\begin{subfigure}[t]{.3\textwidth}
  \centering
  \psfrag{Probability of Error}[Bc][c]{\scriptsize{Probability of Error $[P_{\rm{e}}]$}}
    \psfrag{Number of Sensors [k]}[cc][c]{\scriptsize{Number of Sensors $[K]$}}
	\psfrag{Greedy}{\fontsize{6}{6}\selectfont{Greedy SNR}}
	\psfrag{Submodular Surrogate}{\fontsize{6}{6}\selectfont{Submodular Surrogate}}
	\psfrag{Convex Method}{\fontsize{6}{6}\selectfont{Convex Method}}
\includegraphics[width=0.95\textwidth, height = 1.7in]{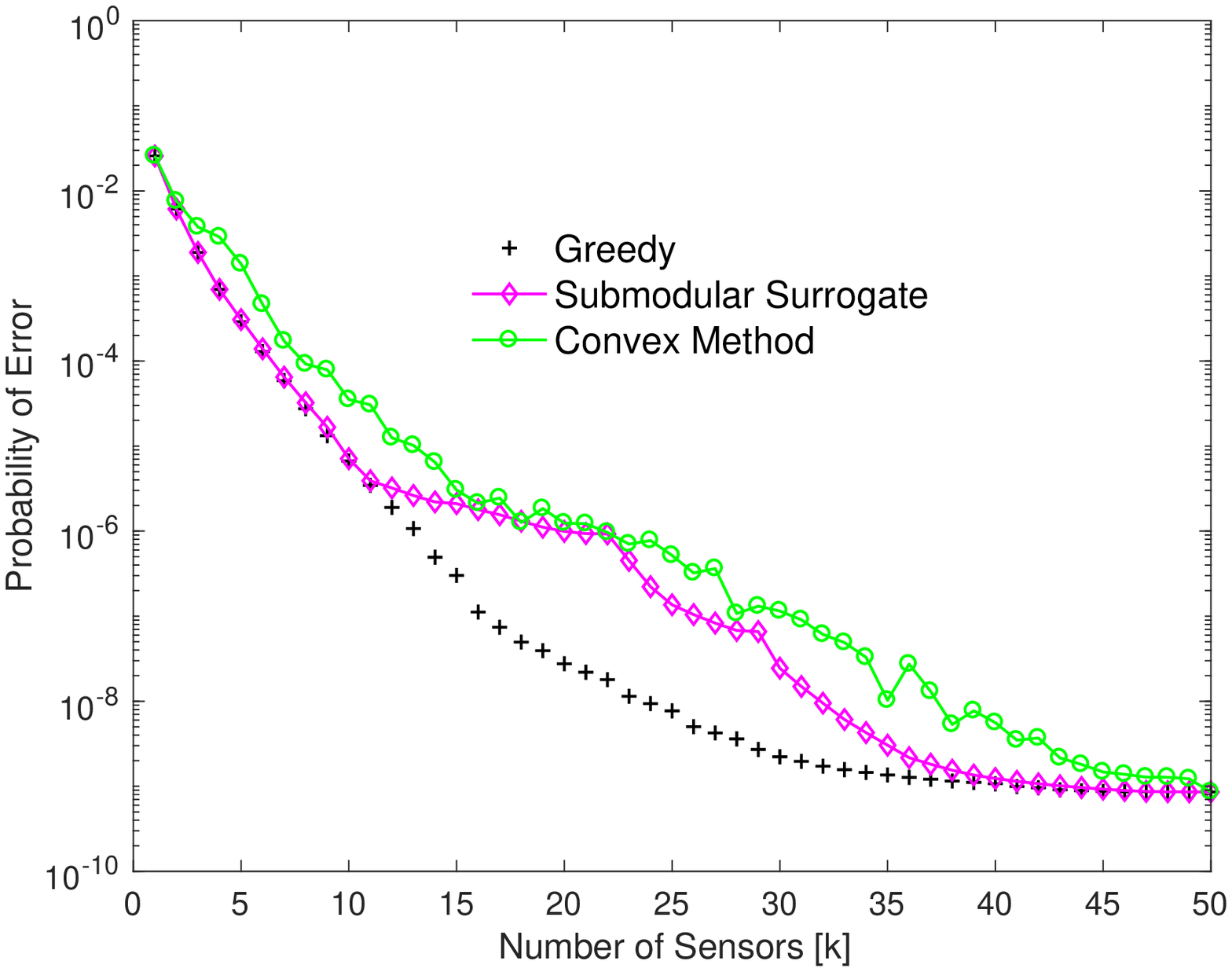}
\caption{}
\label{fig:snrLargeSetUniform}
\end{subfigure}
\caption{Results for the random Toeplitz and uniform covariance matrices. (a) Bayesian probability of error $P_{\rm{e}}$ between the convex relaxation and the greedy heuristic for~\eqref{eq:hp2} with different subset sizes $K$ when choosing from $M=50$ available sensors and random Toeplitz covariance matrices. (b) Bayesian probability of error $P_{\rm{e}}$ between the convex relaxation and the greedy heuristic for~\eqref{eq:hp2} with different subset sizes $K$ when choosing from $M=50$ available sensors and a uniform covariance matrix [c.f.~\eqref{eq:eqcov}]. (c) Histogram of the distribution of the gain in SNR of different sensor selection strategies when the relaxed convex problem is considered as baseline. The sensor selection problem is solved for $M=50$ available sensors over several realizations and different subset sizes. The height of the bar represents the relative frequency of the gain in the $x$-axis.}
\label{fig:test}
\end{figure*}
To illustrate the performance of the submodular optimization machinery, we present two different examples for~\eqref{eq:hp2} under the Bayesian setting with $P(\mathcal{H}_{0}) = 0.3$. First, let us consider a small-scale sensor selection problem where the best $K$ sensors have to be selected from a pool of $M=15$ available sensors. This small scale example allows us to compare the developed algorithm with the optimal solution. In this example, $1000$ Monte-Carlo runs are performed. The common covariance matrix $\bs{\Sigma}$, in each Monte-Carlo run, is generated using a superposition of $M$ unit power Gaussian sources in an array signal processing model~\cite{VT1},
and the mean vectors $\bs{\theta}_{i}$ are considered i.i.d. Gaussian random unit vectors. We solve the problem by performing an exhaustive search over all possible ${M}\choose{K}$ combinations. The subset that maximizes and minimizes the $P_{\rm{e}}$ of the system is obtained and represents the worst and best possible selection, respectively. In addition, a comparison between the average performance of the greedy algorithm and the convex relaxation of the problem is shown in Fig.~\ref{fig:snrSmallSet}. In the plot, the $P_{\rm{e}}$ obtained by applying directly the greedy heuristic to the signal-to-noise ratio set function is denoted as \emph{Greedy SNR}. From Fig.~\ref{fig:snrSmallSet}, it is seen that even though the submodular surrogate, given by expression~\eqref{eq:subs2}, does not perform as good as optimizing the original signal-to-noise ratio set function, its performance is comparable to the one obtained by the convex relaxation approach. However, applying Algorithm~\ref{al:greedy} to the submodular surrogate incurs a lower computational complexity due to its recursive implementation. In Fig.~\ref{fig:snrSmallSet}, the shaded area shows the region where all other sub-optimal samplers would lie for this problem. 

The previous example was intended to illustrate the performance of the discussed methods in comparison with the exhaustive search. However, for interesting problem sizes, exhaustive search solutions are not feasible even for small subset cardinalities. To illustrate the performance of the submodular surrogate for larger problem sizes, in the following example, instead of using the exhaustive search result as baseline, we compare the greedy heuristic with the convex relaxation for a problem of size $M=50$. In Fig.~\ref{fig:snrLargeSet}, the average performance over $1000$ Monte-Carlo runs is shown, when the common covariance matrix $\bs{\Sigma}$, is considered to be a random Toeplitz symmetric matrix, and the mean vector i.i.d. Gaussian as before. Similar to the results from the previous example, the greedy rule from Algorithm~\ref{al:greedy} provides the lowest $P_{\rm{e}}$ when it is applied to the original signal-to-noise ratio function. As before, the submodular surrogate provides subsets with comparable system performance as the convex relaxation method with randomization, but with a reduced computational cost.

In Fig.~\ref{fig:snrHist}, we show the ratio between the SNR of the greedy and the submodular surrogate with respect to the solution of the relaxed convex problem for $100$ Monte-Carlo realizations of problem~\eqref{eq:hp2} when random Toeplitz covariance matrices are considered for $\bs{\Sigma}$. The percentage of occurrence is shown in the vertical axis of the bar plot. In each Monte-Carlo run, the solution using the three approaches was computed, for the subset sizes $K = \{1, 6,11,16,21,26,31,36,41,46\}$. The histogram is computed over all subset sizes for each of the methods. It is evident from Fig.~\ref{fig:snrHist} that the greedy heuristic, when applied to the original signal-to-noise ratio, provides the best performance of all methods. As expected, the submodular surrogate set function provides similar results as the convex relaxation due to the fact that both are constructed from the Schur complement.

Finally, in Fig.~\ref{fig:snrLargeSetUniform} the comparison of the different methods is shown for the case when the covariance matrix $\bs \Sigma$ is considered to be a uniform correlation covariance matrix, i.e., 
\begin{equation}
\boldsymbol{\Sigma} = \begin{bmatrix}
1 & \rho & \rho & \ldots & \rho \\
\rho & 1 & \rho & \ldots &\rho \\
\vdots & \vdots & \ddots & \vdots & \vdots \\
\rho & \rho & \rho & \ldots & 1
\end{bmatrix},
\label{eq:eqcov}
\end{equation}
with correlation factor $\rho = 0.43$. From Fig.~\ref{fig:snrLargeSetUniform} it is seen that the submodular surrogate outperforms the convex relaxation for all subset sizes. However, the best performance is achieved by the greedy heuristic applied directly to the signal-to-noise ratio set function.



\subsection{When Does Greedy on the SNR Fail?}
In the previous part, it has been numerically shown that the greedy heuristic applied directly to the signal-to-noise ratio set function might perform better than both the convex and submodular relaxations of the problem. However, we should be aware that the application of the greedy heuristic for a non-submodular maximization does not provide any optimality guarantees in general. Therefore, there might be problem instances in which the direct maximization of such a set function could lead to arbitrary bad results. In order to illustrate the importance of submodularity for the greedy heuristic, we show an example of the sensor selection problem in which the greedy method applied  to the signal-to-noise performs worse than the submodular surrogate. Consider an example with $M=3$ available sensors, from which we desire to obtain the best subset of $K=2$ sensors which provides the highest signal-to-noise ratio. In addition, we consider the case where the difference of the mean vectors is the all-one vector, i.e., $\boldsymbol{\theta} = [1,1,1]^{T}.$ The covariance matrix for the noise is given by the block matrix
$$
\boldsymbol{\Sigma} = \begin{bmatrix}
1/(1-\rho^{2}) & -\rho/(1-\rho^{2}) & 0 \\
-\rho/(1-\rho^{2}) & 1/(1-\rho^{2}) & 0 \\
0 & 0 & 1
\end{bmatrix},
$$
where $\rho\in[0,1)$. The signal-to-noise ratio set function is defined as
$
s(\mathcal{A}) = \textbf{1}^{T}_{\mathcal{A}}\boldsymbol{\Sigma}^{-1}_{\sset{A}}\textbf{1}_\mathcal{A}.
$
Since $s({\{1\}}) = s({\{2\}}) = (1-\rho^{2})$ and $s({\{3\}})=1$, Algorithm~\ref{al:greedy} will select $\{3\}$ first as $\rho \leq 1$, i.e., $\mathcal{A}_{1} = \{3\}$. Then, either $\{1\}$ or $\{2\}$ are chosen next as both have the same set function value, i.e.,
$$
s({\{3,1\}}) = s({\{3,2\}}) = s(\mathcal{A}_{\sset{G}}) =2 -\rho^{2},
$$
where $\sset{A}_{\sset{G}}$ denotes the set obtained from the greedy SNR solution, i.e., obtained by greedily maximizing the SNR. However, the maximum of the set function is attained with the set $\mathcal{A}^{*} = \{1,2\}$ which provides the set function value $s(\textbf{w}_{\mathcal{A}^{*}}) = 2 + 2\rho$. For the limiting case $\rho\rightarrow 1$, we obtain
$$
\underset{\rho\rightarrow 1}{\lim}\;\frac{s({\mathcal{\mathcal{A}_{G}}})}{s({\mathcal{A}^{*}})} = 0.25.
$$
Even though the greedy heuristic can provide good results in many cases, one should thus be aware that it could get stuck in solutions far from the optimal.

We will now show for the above example that, on average, applying the greedy heuristic to the submodular surrogate performs better than applying it to the original SNR cost set function. First, let us consider the following decomposition of the noise covariance matrix,
\begin{equation}
\textbf{S} = \bs\Sigma - a\textbf{I} =\begin{bmatrix}
\frac{1}{1 - \rho^2} - a & -\frac{\rho}{1 - \rho^2} & 0\\ -\frac{\rho}{1 - \rho^2} &  \frac{1}{1 - \rho^2} - a & 0\\ 0 & 0 & 1 - a
\end{bmatrix},
\label{eq:Sexam}
\end{equation}
for any $a$ chosen as described in Theorem~\ref{th:epSub}.

Then, the inverse of~\eqref{eq:Sexam} can be expressed as
$$
\textbf{S}^{-1} = \begin{bmatrix}
\frac{a\, \rho^2 - a + 1}{a^2\, \rho^2 - a^2 + 2\, a - 1} & -\frac{\rho}{a^2\, \rho^2 - a^2 + 2\, a - 1} & 0\\ -\frac{\rho}{a^2\, \rho^2 - a^2 + 2\, a - 1} & -\frac{a\, \rho^2 - a + 1}{a^2\, \rho^2 - a^2 + 2\, a - 1} & 0\\ 0 & 0 & -\frac{1}{a - 1} 
\end{bmatrix}.
$$
The submodular cost set function can be evaluated for each of the sensors by considering its factors as in~\eqref{eq:prod}, i.e.,
$$
\begin{array}{rl}
\gamma({\{i\}}) & =\det\big(\textbf{S}^{-1} + a^{-1}\textbf{I}_{{\{i\}}}\big) = \frac{1}{a\, \left(a - 1\right)\, \left(a^2\, \rho^2 - a^2 + 2\, a - 1\right)},\\
s({\{i\}}) &= 1 - \rho^2,\;\text{for }i=1,2,
\end{array}
$$
and
$$
\begin{array}{rl}
\gamma({\{3\}}) & =\det\big(\textbf{S}^{-1} + a^{-1}\textbf{I}_{{\{3\}}}\big) = \frac{1 - \rho^2}{a\, \left(a - 1\right)\, \left(a^2\, \rho^2 - a^2 + 2\, a - 1\right)}\\
s({\{3\}}) &= 1.
\end{array}
$$
It is clear that the submodular cost set function provides the same value for any of the sensors, i.e.,
$$
\gamma({\{1\}})s({\{1\}}) = \gamma({\{2\}})s({\{2\}}) = \gamma({\{3\}})s({\{3\}}).
$$
Hence, if we break this tie arbitrarily, the possible values of the cost set function are
$$
\begin{array}{ll}
\gamma({\{1,2\}})s({\{1,2\}}) = \gamma({\{2,1\}})s({\{2,1\}}) =   \frac{2 + 2\rho}{a^2\, \left(a - 1\right)\, \left(a^2\, \rho^2 - a^2 + 2\, a - 1\right)}\\
\gamma({\{3,1\}})s({\{3,1\}}) = \gamma({\{3,2\}})s({\{3,2\}}) =  \frac{2 - \rho^2}{a^2\, \left(a - 1\right)\, \left(a^2\, \rho^2 - a^2 + 2\, a - 1\right)},
\end{array}
$$
where we consider the fact that the greedy heuristic does not select the $3$rd sensor after the $1$st or the $2$nd sensor has been selected, i.e., the marginal gain is larger when the sensors $\{1,2\}$ are selected. Therefore, the average value attained by the submodular method is
$$
\begin{array}{ll}
E\big[s({\mathcal{A}_{\mathcal{S}}})\big] &= \frac{1}{3}(s({\{1,2\}}) + s({\{2,1\}})) + \frac{1}{6}(s({\{3,1\}}) + s({\{3,2\}})) \\&= \frac{2}{3}s({\mathcal{A}^{*}}) + \frac{2-\rho^{2}}{3},
\end{array}
$$
where $\mathcal{A}_{\mathcal{S}}$ is the set returned by the maximization of the submodular surrogate. In the limiting case $\rho\rightarrow 1$, we have the following limit
$$
\underset{\rho\rightarrow 1}{\lim}\; \frac{E\big[s({\mathcal{A}_{\mathcal{S}}})\big]}{s({\mathcal{A}^{*}})}  = 0.75,
$$
which provides a higher approximation ratio compared with the previously seen greedy heuristic. However, it is clear that the proposed method also suffers from one of the drawbacks of greedy methods: when more than one possible solution obtains the same cost set function value, either ties should be broken arbitrarily or multiple branches have to be initialized.

Now, we show a larger instance of the previous example, where a set of $M=200$ available sensors are considered. Furthermore, a block precision matrix $\bs \Sigma^{-1}$ with the following structure is considered for performing sensor selection
\begin{equation}
\bs \Sigma^{-1} = \begin{bmatrix}
\bm T & \bm 0 \\
\bm 0 & \bm I
\end{bmatrix}\in\mathbb{R}^{M\times M}
\label{eq:premtx}
\end{equation}
where $\bm T = \rm{Toeplitz}([1,\rho^1,\rho^2,\ldots,\rho^{M-1}])\in\mathbb{R}^{M/2\times M/2}$ is an exponential decaying Toeplitz matrix, and $\bm I\in\mathbb{R}^{\lfloor M/2\rfloor \times \lfloor M/2 \rfloor}$ is the identity matrix. This kind of precision matrices could arise in systems where only a subset of sensors are calibrated, i.e., block of sensors whose precision matrix is the identity. The mean difference vector, i.e., $\boldsymbol\theta = \boldsymbol{\theta}_{1} -\boldsymbol{\theta}_0$, is considered the all-ones vector, and ties in the selection are broken arbitrarily. In this example $\rho = 0.18$ has been fixed. From Fig.~\ref{fig:snrLargeSetUniform2} it can be seen that even though for a small number of selected sensors both methods achieve similar SNR, the submodular surrogate outperforms the Greedy SNR method for most of the subset sizes. This result is expected due to the fact that the worst case bound given in Theorem~\ref{th:epSub} for $\epsilon$-submodular set functions worsen as the size of the solution increases. More importantly, the submodular surrogate reaches the maximum SNR when half the sensors have been selected, whereas the Greedy SNR requires all the sensors to reach the maximum SNR.
 
\begin{figure}
\centering
    \psfrag{SNR [dB]}[Bc][c]{\footnotesize{SNR [dB]}}
    \psfrag{Number of Sensors [k]}[Bc][c]{\footnotesize{Number of Sensors $[K]$}}
	\psfrag{Greedy SNR}{\fontsize{6}{6}\selectfont{Greedy SNR}}
	\psfrag{Submodular Surrogate}{\fontsize{6}{6}\selectfont{Submodular Surrogate}}
\includegraphics[width=0.5\textwidth, height = 2in]{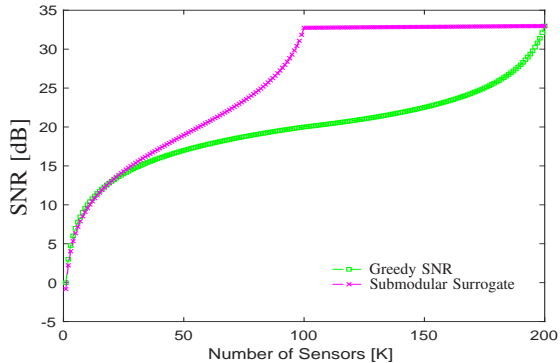}
\caption{Signal-to-noise ratio  between the Greedy SNR and the submodular surrogate for different subset sizes $K$ when choosing from $M=200$ available sensors for an instance of the problem with precision matrix given by~\eqref{eq:premtx}.}
\label{fig:snrLargeSetUniform2}
\end{figure}

\section{Observations with Uncommon Covariances}
\label{sec:ssuc}
In this section, we discuss sensor selection for detection when the data model for the hypotheses under test differ in their second-order statistics. For the case of Gaussian distributed measurements we can assume that the conditional distributions for the binary hypothesis test are given by
\begin{equation}
\begin{array}{c}
\sset{H}_{0} : \bm{x}\sim\sset{N}(\bs{\theta},\bs{\Sigma}_{0})\\
\sset{H}_{1} : \bm{x}\sim\sset{N}(\bs{\theta},\bs{\Sigma}_{1}),
\end{array}
\label{eq:hp3}
\end{equation}
where the mean vector $\bs{\theta}\in\mathbb{R}^{M}$ is shared by both hypotheses and the second-order statistics of the data are characterized by the $M\times M$ covariance matrices $\bs{\Sigma}_{0}$ and $\bs{\Sigma}_{1}$ for the hypothesis $\sset{H}_{0}$ and $\sset{H}_{1}$, respectively.

Using the metrics discussed in Section~\ref{sec:2}, it has been suggested in~\cite{sub:r15} that the different distance measures between probability distributions can be applied to construct selection strategies using convex optimization. However, it turns out that some of the metrics to optimize can only be expressed as the difference of submodular functions, therefore the SupSub procedure described in Section~\ref{sec:pre} can be employed for its optimization. In the next section, we show how it is possible to decompose the divergence measures into the difference of submodular functions.

\subsection{Submodular Decomposition of Divergence Measures}
Unlike the case with commons means, the three distances discussed are not scaled versions of each other. For the linear model in~\eqref{eq:linModelY} under Gaussian noise, the Bhattacharyya distance~\eqref{eq:bhatDiv1} is given as the following difference of submodular set functions
\begin{align}
f({\sset{A}}) &= \mathcal{B}(\mathcal{H}_{1}\Vert\mathcal{H}_{0})\coloneq g({\mathcal{A}})-h({\mathcal{A}}); \\
g(\mathcal{A}) &= \frac{1}{2}\log\det(\bs{\Sigma}_{\sset{A}}); \nonumber \\
h(\mathcal{A}) &= \frac{1}{4}(\log\det(\bs{\Sigma}_{0,\sset{A}})+\log\det(\bs{\Sigma}_{1,\sset{A}})) \nonumber.
\end{align}

The submodularity of $h({\mathcal{A}})$ and $g({\mathcal{A}})$ is clear as both functions are linear combinations of entropy functions. As a result, the Bhattacharyya distance can be approximately maximized using the SupSub procedure described in Algorithm~\ref{al:supsub}.  

Differently from the Bhattacharyya distance, the expressions for the KL divergence and the J-divergence in~\eqref{eq:kldivdi} and~\eqref{eq:jedi} for the distributions in~\eqref{eq:hp3} do not provide a direct decomposition in submodular set functions because in both divergences there are trace terms that cannot be expressed directly as a difference of submodular functions. Even though such decompositions exist~\cite{sub:r11}, in general, finding them incurs exponential complexity~\cite{sub:r12}. However, similarly as in the case of the signal-to-noise ratio cost set function, a readily available submodular surrogate can be employed in order to optimize both distances using the SupSub procedure.

In order to obtain a submodular approximation of the trace term, let us consider the following set function
\begin{equation}
q(\boldsymbol\Sigma_{\mathcal{A}},\boldsymbol\Psi_{\mathcal{A}}) = \text{tr}\{\boldsymbol\Sigma^{-1}_\sset{A}\boldsymbol\Psi_\sset{A}\},
\label{eq:psiset}
\end{equation}
where $\mathcal{A}$ is the index set of the selected sensors and $\boldsymbol{\Sigma}_{\sset{A}}$ and $\boldsymbol{\Psi}_\sset{A}$ are submatrices defined by the rows and columns of $\boldsymbol{\Sigma}$ and $\boldsymbol{\Psi}$, respectively, given by the elements of the set $\sset{A}$. Let us decompose one of the matrices as
$
\boldsymbol\Sigma = a\textbf{I} + \textbf{S},
$
where a nonzero $a\in\mathbb{R}$ is chosen as described in Theorem~\ref{th:epSub} and therefore $\textbf{S}\succ 0$. The set function in~\eqref{eq:psiset} is then equivalent to
\begin{equation*}
\begin{aligned}
q(\boldsymbol\Sigma_{\mathcal{A}},\boldsymbol\Psi_{\mathcal{A}}) &= \text{tr}\{\textbf{S}^{-1}\boldsymbol\Psi - \textbf{S}^{-1}\big[\textbf{S}^{-1} + a^{-1}\text{diag}(\textbf{1}_{\mathcal{A}})\big]^{-1}\textbf{S}^{-1}\boldsymbol\Psi
\}\\
&= \text{tr}\{\boldsymbol\Psi^{\frac{T}{2}}\textbf{S}^{-\frac{1}{2}}\big(
\textbf{I} - \textbf{S}^{-\frac{T}{2}}[\textbf{S}^{-1}\\
& \qquad + a^{-1}\text{diag}(\bm{1}_\mathcal{A})]^{-1}\textbf{S}^{-\frac{1}{2}}
\big)\textbf{S}^{-\frac{T}{2}}\boldsymbol\Psi^{\frac{1}{2}}
\}\\
&= \sum\limits_{i=0}^{M}\text{tr}\{\bm z_{i}^{T}\big( 
\textbf{I} - \textbf{S}^{-\frac{T}{2}}[\textbf{S}^{-1}\\
& \qquad+ a^{-1}\text{diag}(\bm{1}_{\mathcal{A}})]^{-1}\textbf{S}^{-\frac{1}{2}}
\big)\bm z_{i} \},
\label{eq:subpsi}
\end{aligned}
\end{equation*}
where $\bm z_{i}$ has been defined as the $i$-th column of $\textbf{S}^{-\frac{T}{2}}\boldsymbol\Psi^{\frac{1}{2}}$. Analogously to the uncommon means case, where the signal-to-noise ratio was replaced by its submodular surrogate, we can substitute $q(\boldsymbol\Sigma_{\mathcal{A}},\boldsymbol\Psi_{\mathcal{A}})$ by the following submodular set function
\begin{equation*}
q_{\rm{sub}}(\boldsymbol\Sigma_{\mathcal{A}},\boldsymbol\Psi_{\mathcal{A}}) \coloneq \sum\limits_{i=1}^{M}\log\det\begin{bmatrix}
\textbf{S}^{-1}+a^{-1}\text{diag}(\textbf{1}_{\mathcal{A}}) & \textbf{S}^{-\frac{1}{2}}\bm z_{i}\\
\bm z_{i}^{T}\textbf{S}^{-\frac{T}{2}} & \bm z_{i}^{T}\bm z_{i}
\end{bmatrix}
\label{eq:psifin}
\end{equation*}
which is submodular on the set of selected entries ${\mathcal{A}}$. It is clear that the set function $q_{\rm{sub}}(\boldsymbol\Sigma_{\mathcal{A}},\boldsymbol\Psi_{\mathcal{A}})$ is submodular as it is a non-negative combination of submodular set functions in~$\mathcal{A}$. Furthermore, as this set function shares a similar structure with respect to the signal-to-noise ratio set function [cf.~\eqref{eq:csnr}], i.e.,
\begin{equation}
\begin{array}{ll}
q_{\rm{sub}}(\boldsymbol\Sigma_{\sset{A}},\boldsymbol\Psi_{\sset{A}}) &\coloneq M\log\det(\bm S^{-1}+a^{-1}\text{diag}(\bm 1_{\sset{A}}))+\\ &\quad\quad\sum\limits_{i=1}^{M}\log(\boldsymbol\psi_{i}^T\boldsymbol\Phi_{\sset{A}}^T\boldsymbol{\Sigma}_{\sset{A}}^{-1}\boldsymbol{\Phi}_{\sset{A}}\boldsymbol{\psi}_{i}),
\label{eq:psifin2}
\end{array}
\end{equation}
where $\boldsymbol\psi_{i}$ is the $i$th column of $\boldsymbol{\Psi}^{\frac{1}{2}}$, an efficient evaluation of~\eqref{eq:psifin2} can be performed through a recursive definition similar to the one in~\eqref{eq:recCost}. Unfortunately, as the summation is over $M$ terms, this formulation leads to a worst-case complexity of $\mathcal{O}(M^2K^3)$ for finding the solution through a greedy heuristic. However, for instances with $K \ll M$ this algorithm improves, in terms of speed, with respect to the cubic complexity of the convex relaxation.

After the introduction of the submodular set function $q_{\rm{sub}}$, surrogates for the divergences $\mathcal{K}(\cdot)$ and $\mathcal{D}_{\rm J}(\cdot)$ denoted as $\mathcal{K}_{\rm{sub}}(\cdot)$ and $\mathcal{D}_{\rm{J,sub}}(\cdot)$, respectively, can be obtained. The following is observed from these surrogates:
\begin{itemize}
\item $\mathcal{K}_{\rm{sub}}(\cdot)$ can be expressed as a mixture of submodular and supermodular set functions as
\begin{align*}
&\mathcal{K}_{\rm{sub}}(\mathcal{H}_{1}\Vert\mathcal{H}_{0}) = g(\mathcal{A}) - {h}(\mathcal{A});\\
g(\mathcal{A}) &= \frac{1}{2}\log\det(\bs{\Sigma}_{0,\sset{A}}) + \frac{1}{2}q_{\rm{sub}}(\bs{\Sigma}_{0,\sset{A}},\bs{\Sigma}_{1,\sset{A}});\\
{h}(\mathcal{A}) &= \frac{1}{2}\log\det(\bs{\Sigma}_{1,\sset{A}}).
\end{align*}
\item $\mathcal{D}_{\rm{J,sub}}(\cdot)$ is a submodular set function as it is a non-negative combination of two submodular functions, i.e.,
$$
\mathcal{D}_{\rm{J,sub}}(\mathcal{H}_{1}\Vert\mathcal{H}_{0}) = \frac{1}{2}(q_{\rm{sub}}(\boldsymbol\Sigma_{0,\sset{A}},\boldsymbol\Sigma_{1,\sset{A}}) + q_{\rm{sub}}({\boldsymbol\Sigma_{1,\sset{A}},\boldsymbol\Sigma_{0,\sset{A}}})).
$$
\end{itemize}
From these results, it is clear that $\mathcal{K}_{\rm{sub}}(\cdot)$ can be optimized using the SupSub procedure in Algorithm~\ref{al:supsub} as in the case of the Bhattacharyya distance, while $\mathcal{D}_{\rm{J,sub}}(\cdot)$ can be directly optimized using the greedy heuristic from Algorithm~\ref{al:greedy}.

\subsection{Numerical Examples}
We demonstrate the applicability of the SupSub in Algorithm~\ref{al:supsub} for solving the maximization of the different divergences used for sensor selection, and its respective surrogates by comparing the results with the widely used CCP heuristic. To do so, first we perform an exhaustive search to solve the sensor selection problem for the test in~\eqref{eq:hp3} under the Neyman-Pearson setting. We find the subset of size $K$ that maximizes the KL divergence, for random covariance matrices of size $M=15$ and for random Toeplitz matrices of size $M=50$. The results are shown in Fig.~\ref{fig:covSmallSet} and Fig.~\ref{fig:covLargeSet}, respectively. From these examples, it is seen that the greedy heuristic of Algorithm~\ref{al:greedy} applied to the KL divergence (\emph{KL Greedy}), the SupSub procedure using both the original KL expression  (\emph{SupSub KL-Div})\footnote{This is done by computing the expressions of the modular upper bounds [cf.~\eqref{eq:modBndA} and~\eqref{eq:modBndB}] for the set function $q(\bs \Sigma_{\sset{A}},\bs\Psi_{\sset{A}})$, despite that the function is not submodular.} and the submodular surrogate (\emph{SupSub Surrogate}) perform either better or as good as the CCP heuristic while incurring a much lower complexity. For random Toeplitz matrices, as seen in Fig.~\ref{fig:covLargeSet}, all the methods perform close to each other, however the CCP method achieves this performance with a higher computational load. 
\begin{figure*}
\centering
	\begin{subfigure}[t]{0.5\textwidth}
	\centering
	\psfrag{Number of Sensors [k]}[tc][c]{\footnotesize{Number of Sensors $[K]$}}
	\psfrag{KL-Divergence}[Bc][c]{\footnotesize{KL-Divergence}}
	\psfrag{Exhaustive Search}{\fontsize{6}{6}\selectfont{Exhaustive Search}}
	\psfrag{SupSub Procedure}{\fontsize{6}{6}\selectfont{SupSub KL-Div}}
	\psfrag{KL-Greedy}{\fontsize{6}{6}\selectfont{KL Greedy}}
	\psfrag{SupSub (Surrogate)}{\fontsize{6}{6}\selectfont{SupSub Surrogate}}
	\psfrag{CCP Pocedure}{\fontsize{6}{6}\selectfont{CCP Procedure}}
	\includegraphics[width=0.8\textwidth, height=1.75in]{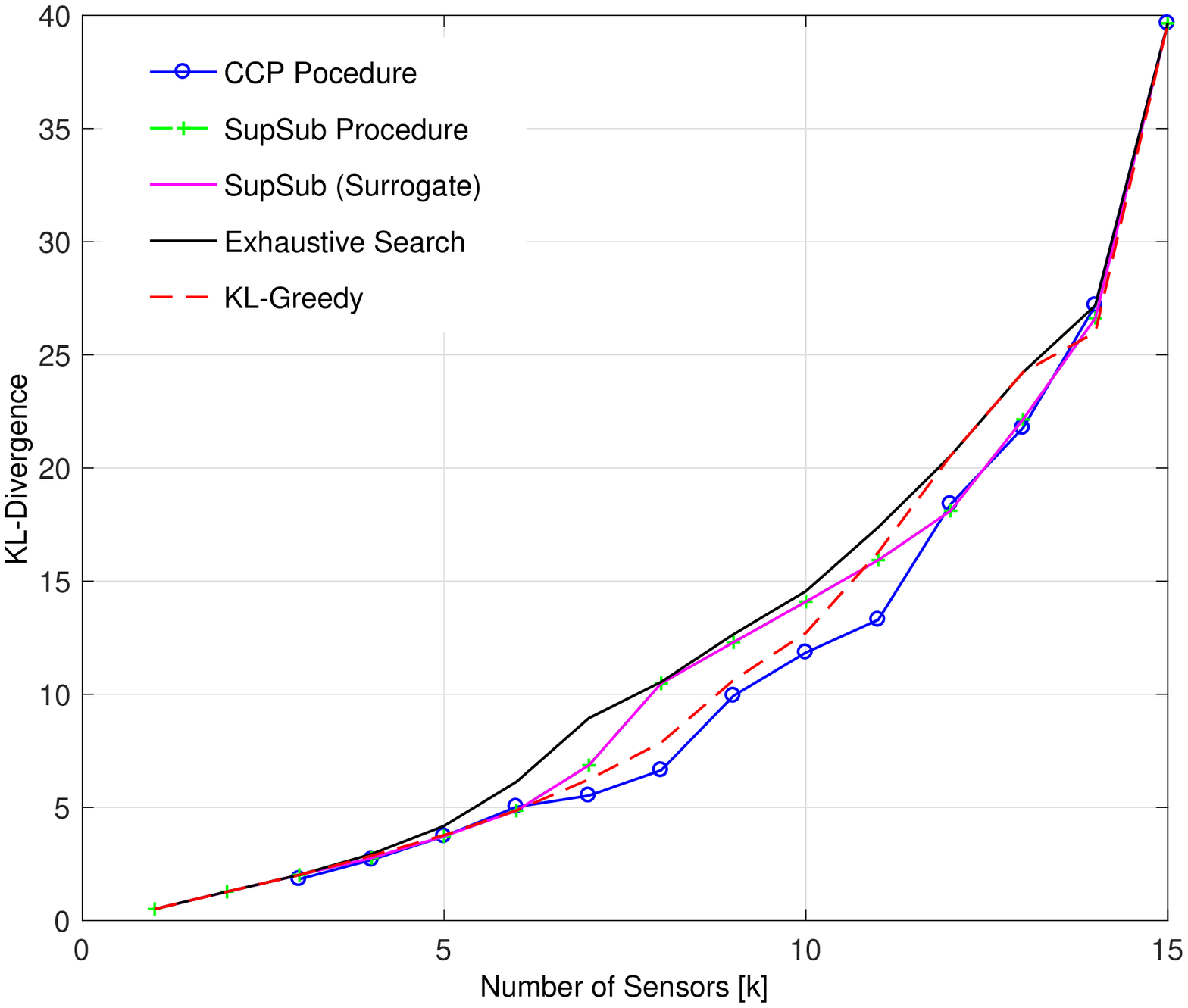}
	\caption{}
	\label{fig:covSmallSet}
	\end{subfigure}%
	\begin{subfigure}[t]{0.5\textwidth}
	\centering
	\psfrag{Number of Sensors [k]}[tc][c]{\footnotesize{Number of Sensors $[K]$}}
	\psfrag{KL-Divergence}[Bc][c]{\footnotesize{KL-Divergence}}
	\psfrag{SupSub Procedure}{\fontsize{6}{6}\selectfont{SupSub KL-Div}}
	\psfrag{KL-Greedy}{\fontsize{6}{6}\selectfont{KL Greedy}}
	\psfrag{SupSub (Surrogate)}{\fontsize{6}{6}\selectfont{SupSub Surrogate}}
	\psfrag{CCP Pocedure}{\fontsize{6}{6}\selectfont{CCP Procedure}}
	\includegraphics[width=0.8\textwidth, height=1.75in]{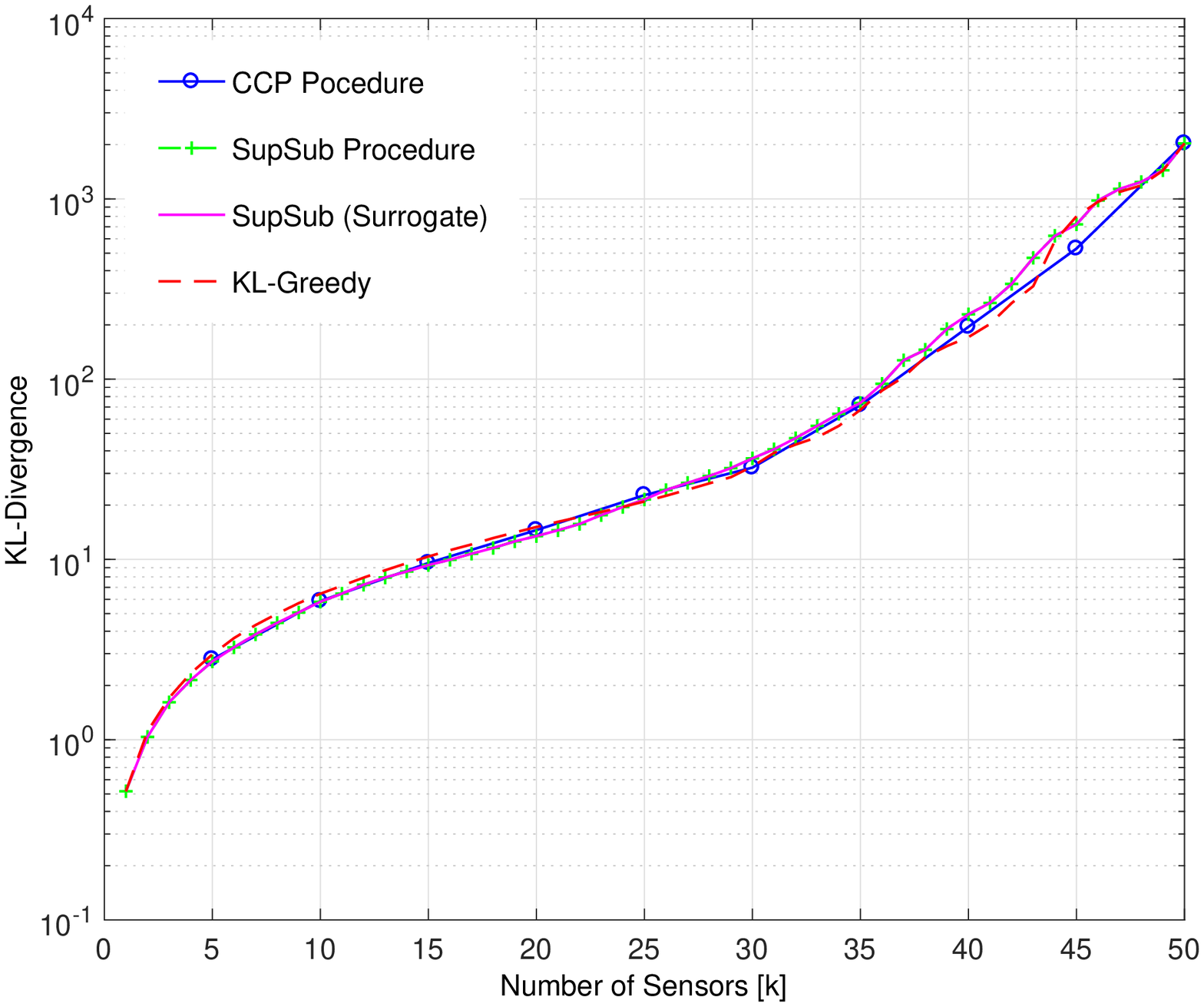}
	\caption{}
	\label{fig:covLargeSet}
	\end{subfigure}
	\caption{(a) KL divergence of the different sensor selection methods for different subset sizes $K$ for random covariance matrices. (b) KL divergence of the different sensor selection methods for different subset sizes $K$ for random Toeplitz matrices.}
\end{figure*}

\subsection*{Binary Classification}

Due to the non-monotonic behavior of the classification curves with respect to the number of features, i.e., the error of a classifier does not necessarily reduce when more features are used, a fast and reliable way to select the most relevant features for a given dataset  is required. Therefore, in the following, we present two examples for binary classification where the KL divergence is used as a feature selection metric and it is optimized using the methods described in this work. In these examples, the PRTools Toolbox~\cite{prtools} is used for training classifiers. The built-in feature selection method, based on cross-validation, is used as baseline for comparison with the proposed methods based on the submodular machinery. 

In the first example, we start by considering a simple case: two classes described by Gaussian distributions parametrized by their covariance matrices, $\{\bs\Sigma_{0}, \bs \Sigma_{1}\}$. In this scenario, the covariance matrices are a pair of Toeplitz matrices. The number of features considered for this example is $50$. The trained classifier is the quadratic discriminant classifier (QDC)~\cite{PRqdc}. Furthermore, the $20/80$ rule for training and testing has been used for the $500$ objects contained in the dataset, i.e., $20\%$ of the data set has been used for training the classifier and $80\%$ for reporting its performance on unseen data. Additionally, random sampling of the objects for training has been performed. The selection of such a classifier is due to the nature of the dataset, i.e., as the assumption of Gaussianity of the features holds, QDC is the Bayes detector for equiprobable classes. The comparison of the classification soft error for the selected classifier is shown in Fig.~\ref{fig:classQDC}. The reported error in this figure is given by
\begin{equation}
e := \frac{E_{0}}{|\mathcal{C}_{0}|}P(\mathcal{C}_{0}) + \frac{E_{1}}{|\mathcal{C}_{1}|}P(\mathcal{C}_{1}),
\end{equation}
where $E_{i}$ denotes the number of erroneously classified objects for the $i$th class, denoted by $\mathcal{C}_i$, and $P(\mathcal{C}_{i})$ represents the prior probability for the $i$th class in the validation
set.

As expected, the classification error decreases as the number of selected features increases as in this example QDC provides decision boundaries based on the log-likehood ratio test under Gaussian assumptions. In this example, both methods based on the SupSub procedure provide a similar classification error, being mostly below the PRTools baseline result. In this scenario, for roughly half the number of sensors, the greedy heuristic over the KL divergence provides the lowest classification error.

\begin{figure*}
 	\centering
	\begin{subfigure}[t]{0.5\textwidth}
	\centering
	\psfrag{Number of Features [K]}[tc][c]{\scriptsize{Number of Features $[K]$}}
	\psfrag{Classification Error}[Bc][c]{\scriptsize{Classification Error $[\%]$}}
	\psfrag{PRTools}{\fontsize{6}{6}\selectfont{PRTools}}
	\psfrag{SupSub Procedure}{\fontsize{6}{6}\selectfont{SupSub KL-Div}}
	\psfrag{KL Greedy}{\fontsize{6}{6}\selectfont{KL Greedy}}
	\psfrag{SupSub (Surrogate)}{\fontsize{6}{6}\selectfont{SupSub Surrogate}}
	\includegraphics[width=0.8\textwidth, height=1.75in]{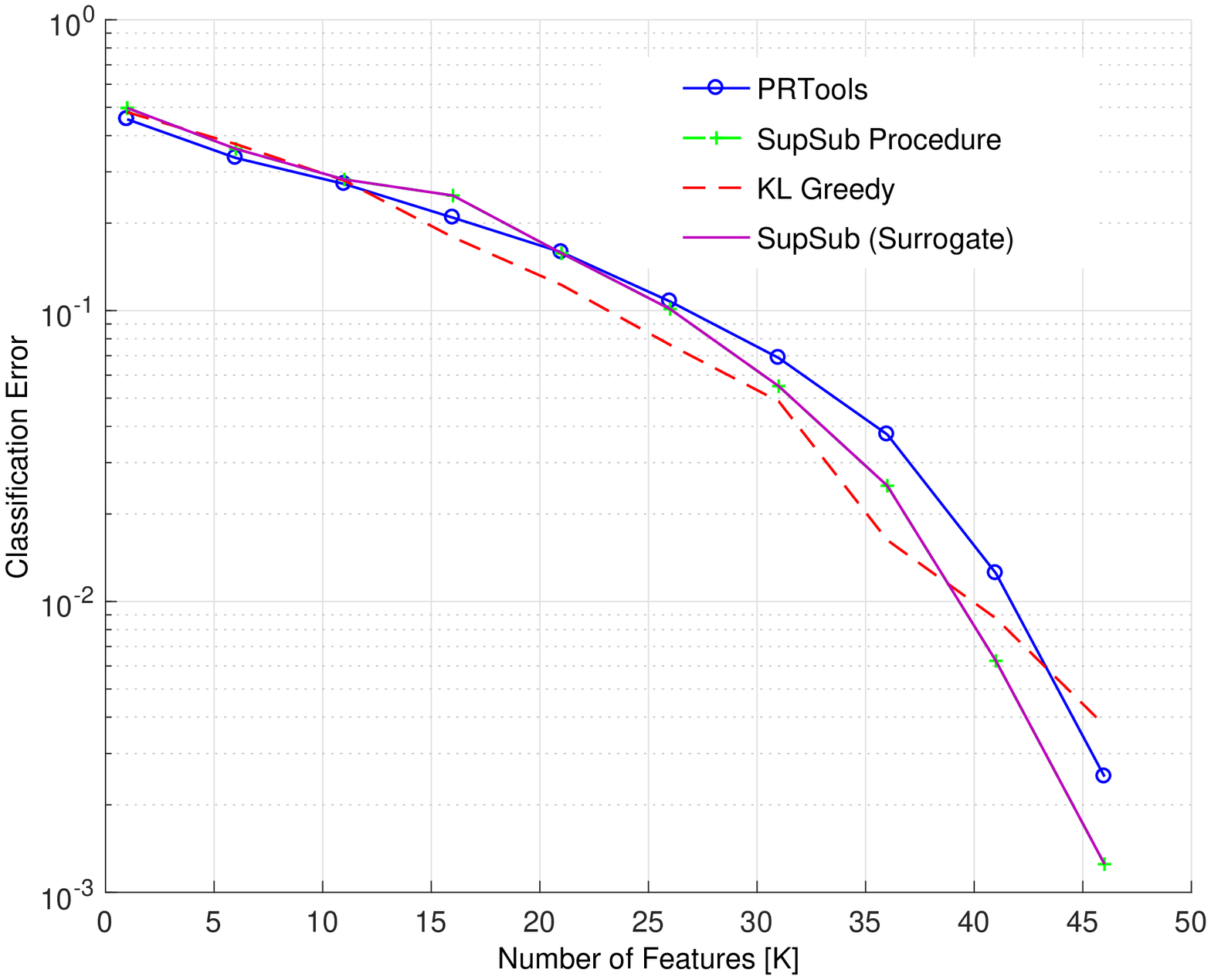}
	\caption{}
	\label{fig:classQDC}
	\end{subfigure}%
	\begin{subfigure}[t]{0.5\textwidth}
	\centering
	\psfrag{Number of Features [K]}[cc][c]{\footnotesize{Number of Features $[K]$}}
	\psfrag{Classification Error [\%]}[Bc][c]{\footnotesize{Classification Error $[\%]$}}
	\psfrag{PRTools}{\fontsize{6}{6}\selectfont{PRTools}}
	\psfrag{SupSub Procedure}{\fontsize{6}{6}\selectfont{SupSub KL-Div}}
	\psfrag{KL Greedy}{\fontsize{6}{6}\selectfont{KL Greedy}}
	\psfrag{SupSub (Surrogate)}{\fontsize{6}{6}\selectfont{SupSub Surrogate}}
	\includegraphics[width=0.8\textwidth, height=1.75in]{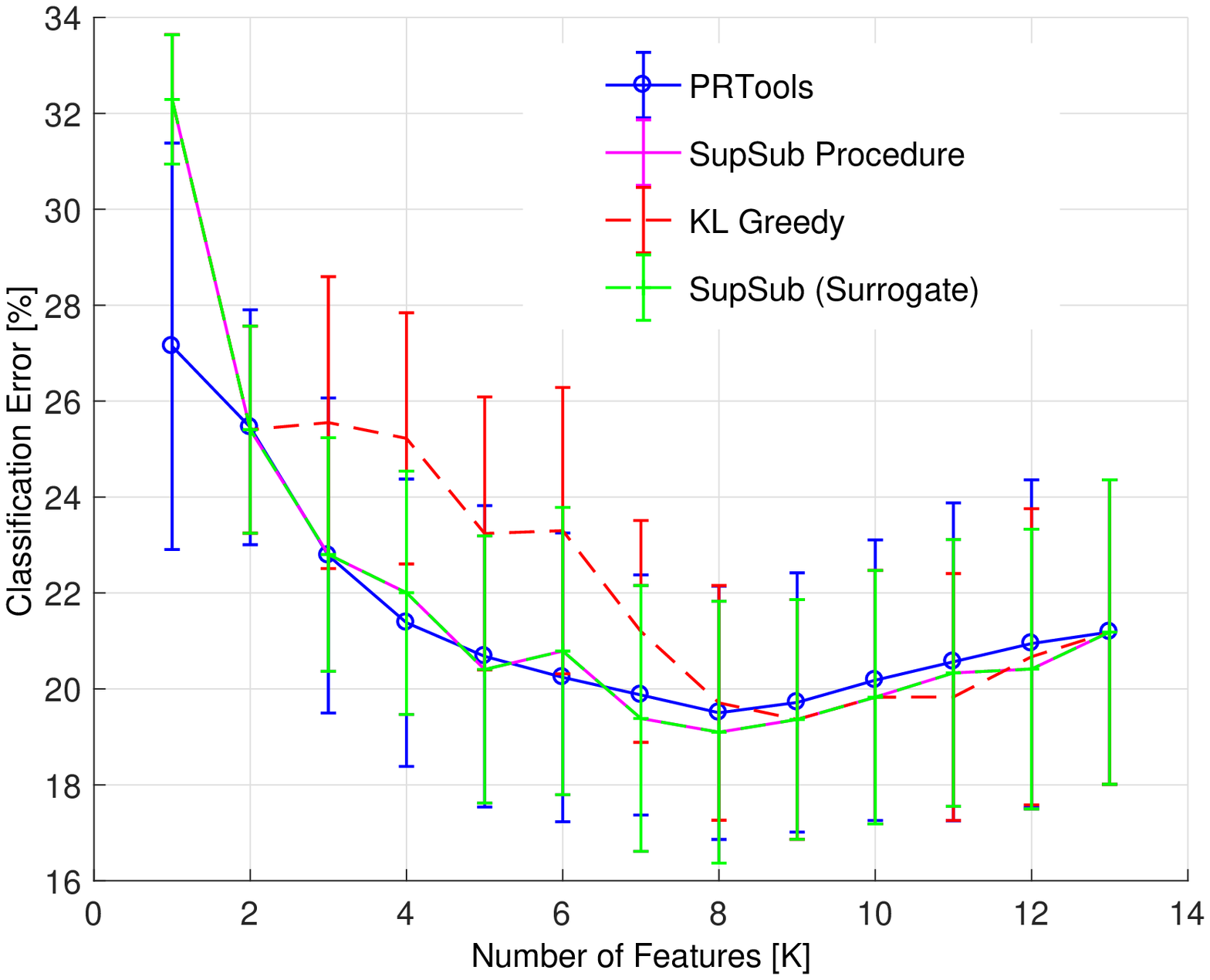}
	\caption{}
	\label{fig:classSVM}
	\end{subfigure}
	\caption{(a) Classification soft error, weighted with class priors, when using QDC for a Gaussian binary classification problem. (b) Classification error for SVMs trained using different feature selection methods for the Heart-Cleveland data set.}
\end{figure*}

\subsection*{Real Dataset Example}

As a second example, we use the \emph{Cleveland Heart Disease Data Set}~\cite{heart} in which a set of $76$ attributes from $303$ patients are reported describing the presence of a heart disease. Due to the nature of the data, only $14$ of the reported attributes are used as features, e.g., id number, social security number, and similar attributes are omitted. In the original dataset, the presence of heart disease is described by an integer number in the range $\{0,1,\ldots,4\}$, however in this scenario we consider a binary hypothesis test in which the label $l=0$ represents a \emph{healthy heart} and the labels $l\geq1$ represent a patient with any kind of heart disease. For further information of the complete dataset the reader is referred to the related online repository~\cite{repoheart}. Similarly to the previous case, only $20\%$ of the data (randomly selected) is used to train the classifier selected for this problem. In this setup, the true covariance matrices for the features are considered for performing selection. That is, from the whole data set the second-order statistics for each feature, within a given class, are computed and the resulting covariance matrix is considered as the true covariance matrix for the data. The same criterion and baseline are used to perform the selection of the features from the dataset, and the results are reported over hundred random selections for the training subset. For this dataset, a support vector machine (SVM) was trained to discriminate between the healthy and unhealthy patients. In Fig.~\ref{fig:classSVM}, the average classification error, in percentage, is reported for each method with their respective $95\%$ confidence interval. From this plot it can be observed that the methods based on the SupSub procedure produce subsets of features which attain a similar performance as the baseline, i.e., the PRTool built-in function optimizing over the QDC metric. However, the method that only uses the greedy heuristic to maximize the KL divergence obtains subsets with a worse performance for a small number of features. When the number of features is close to the maximum, the three methods based on the greedy rule perform slightly better, in both mean error and error deviation, than the baseline feature selection method. Notice the convex behavior of the classification error for the SVM classifier in Fig.~\ref{fig:classSVM}. Differently from the previous example, here the dataset structure is more complex and no Gaussian distribution properly describes it. Therefore, increasing the number of features could possibly overtrain the classifier hindering its generalization capabilities. However, it is important to notice than even when Gaussianity is not granted, the maximization of the KL divergence as a metric for feature selection leads to subsets with a smaller average classification error. 


\section{Conclusions}
\label{sec:conc}
In this paper, we have considered submodular optimization for model-based sparse sampler design for Gaussian signal detection with correlated data. Differently from traditional approaches based on convex optimization, in this work we have focused on efficient methods to solve the sensor selection problem using submodular set functions. We have shown how the discrete optimization of widely used performance metrics, for both Bayesian and Neyman-Pearson settings, can be approximated and solved using the submodular optimization machinery. For Gaussian observations with common covariance and uncommon means we bounded the $\epsilon$-submodularity constant of the SNR set function, and derived a submodular surrogate based on the Schur complement for instances in which such a constant is large. We have shown that for series of practical classes of covariance matrices this surrogate leads to a performance comparable with the convex relaxation of the problem, but at a reduced computational complexity. For the case of common means and uncommon covariance, we propose to employ the SupSub procedure for maximizing the difference of submodular set functions. When the decomposition of the divergence measure into submodular functions is not straightforward, we introduce surrogate decompositions based on the Schur complement that can be evaluated efficiently. Furthermore, a series of numerical examples with both synthetic and real data demonstrate the effectiveness of the proposed methods to perform both sensor and feature selection even when the data is not Gaussian distributed.

\appendices
\section{Proof of Theorem~\ref{th:epSub}}
\label{ap:esub}
First, let us consider the following set function,
\begin{equation}
\hat{h}(\mathcal{A}) = -\bs{\theta}^{T}\bm{S}^{-1}\bigg(a^{-1}\bm{I} + a^{-1}\text{diag}(\bm{1}_{\mathcal{A}})\bigg)^{-1}\bm{S}^{-1}\bs{\theta},
\label{eq:modSNR}
\end{equation}
where an identity matrix has been introduced instead the inverse of $\bm{S}$ [cf.~\eqref{eq:snrExt}] to construct a modular set function. In addition, without loss of generality, let us assume $\Vert\bm{S}^{-1}\bs{\theta}\Vert = 1$. 

As the SNR set function can be defined as [cf.~\eqref{eq:snrExt}]
\begin{equation}
s(\mathcal{A}) := \bs{\theta}^{T}\bm{S}^{-1}\bs{\theta} + h(\mathcal{A}),
\end{equation}
where $h(\mathcal{A}):= -\bs{\theta}^{T}\bm{S}^{-1}\bigg[\bm{S}^{-1}+a^{-1}\text{diag}(\bm{1}_{\mathcal{A}})\bigg]^{-1}\bm{S}^{-1}\bs{\theta}$. We next establish the following inequalities
\begin{equation}
-\epsilon' \leq h(\mathcal{A}) - \hat{h}(\mathcal{A}) \leq \epsilon',
\label{eq:dbineq}
\end{equation}
that for an specific $\epsilon'\in\mathbb{R}_{+}$ provides the relation in~\eqref{eq:thEpSub}. Therefore, in the following we will compute the value of $\epsilon'$ for the signal-to-noise set function. We can bound the difference of the positive definite (PD) matrices that are part of the quadratic forms in the set functions in~\eqref{eq:dbineq}. That is,
\begin{equation*}
-\epsilon'\bm{I} \preceq \bigg[a^{-1}\bm{I}+a^{-1}\text{diag}(\bm{1}_{\mathcal{A}})\bigg]^{-1} - \bigg[\bm{S}^{-1} + a^{-1}\text{diag}(\bm{1}_{\mathcal{A}})\bigg]^{-1}\preceq \epsilon'\bm{I}.
\end{equation*}

Considering $\text{diag}(\bm{1}_{\mathcal{A}})=\bm{E}_{\mathcal{A}}^{T}\bm{E}_{\mathcal{A}}$, we can apply the matrix inversion lemma to expand the difference of the matrices as
\begin{equation}
\bs{\Delta} := a\bigg(\bm{I}-\frac{1}{2}\bm{E}_{\mathcal{A}}^{T}\bm{E}_{\mathcal{A}}\bigg) - \bm{S} + \bm{S}\bm{E}_{\mathcal{A}}^{T}\bigg(a\bm{I} + \bm{E}_{\mathcal{A}}\bm{SE}_{\mathcal{A}}^{T} \bigg)^{-1}\bm{E}_{\mathcal{A}}\bm{S}.
\label{eq:delta}
\end{equation}
As all terms in~\eqref{eq:delta} are PD matrices, we upper bound the previous expression by removing the negative terms in~\eqref{eq:delta}. That is,
\begin{equation}
\bs{\Delta} \preceq a\bm{I} + \bm{S}\bm{E}_{\mathcal{A}}^{T}\bigg(a\bm{I} + \bm{E}_{\mathcal{A}}\bm{SE}_{\mathcal{A}}^{T} \bigg)^{-1}\bm{E}_{\mathcal{A}}\bm{S}.
\end{equation}
Using the maximum singular value of the second matrix, the following inequality holds
\begin{eqnarray*}
\bs{\Delta} &\preceq a\bm{I} + \sigma_{\max}\bigg\{\bm{S}\bm{E}_{\mathcal{A}}^{T}\bigg(a\bm{I} + \bm{E}_{\mathcal{A}}\bm{SE}_{\mathcal{A}}^{T} \bigg)^{-1}\bm{E}_{\mathcal{A}}\bm{S}\bigg\}\bm{I}\\
&\preceq a\bm{I} + \lambda_{\min}^{-1}\bigg\{a\bm{I} + \bm{E}_{\mathcal{A}}\bm{SE}_{\mathcal{A}}^{T}\bigg\}\sigma_{\max}^{2}\bigg\{\bm{E}_{\mathcal{A}}\bm{S}\bigg\}\bm{I}\\
&\preceq a\bm{I} + a^{-1}\lambda_{\max}^{2}\{\bm{S}\}\bm{I},
\end{eqnarray*}
where the submultiplicativity and subadditivity of singular values is used in the second and third inequality, respectively. Considering that the eigenvalues of $\bs{\Sigma}$ are larger to those of $\bm{S}$ by definition, and recalling that $a = \beta\lambda_{\min}(\bs{\Sigma})$ we obtain
\begin{eqnarray}
\bs{\Delta} \preceq a\bm{I} + a^{-1}\lambda_{\max}^{2}\{\bs{\Sigma}\}\bm{I}
\preceq \bigg(a + \frac{\kappa}{\beta}\lambda_{\max}\{\bs{\Sigma}\}\bigg)\bm{I} = \epsilon'\bm{I},
\label{eq:lstchain}
\end{eqnarray}

The expression for $\epsilon'$ in~\eqref{eq:lstchain} can now be used to bound the expression in~\eqref{eq:dbineq}. Therefore, by using the fact that $\hat{h}$ is a modular function we can write the following inequalities
\begin{equation*}
\begin{split}
&h(\sset{A}\cup\{i\}) - h({\sset{A}}) - h({\sset{A}\cup\{i,j\}}) + h({\sset{A}\cup\{j\}})\\
&\leq \hat{h}({\sset{A}\cup\{i\}}) - \epsilon' -\hat{h}({\sset{A}}) -\epsilon' - \hat{h}({\sset{A}\cup\{i,j\}}) -\epsilon' +\\
& \;\;\;\;\;\;\;\;\;\;\;\;\;\;\;\hat{h}({\sset{A}\cup\{j\}}) -\epsilon' = -4\epsilon' = -\epsilon.
\end{split}
\end{equation*}

\section{Proof of Proposition~\ref{prop:mon}}
\label{ap:mono}
\textbf{Monotonicity}: Let us define the following:
$$
\bm{T} = \begin{bmatrix}
\textbf{S}^{-1}  & \textbf{S}^{-1}\bs{\theta}\\
\bs{\theta}^{T}\textbf{S}^{-1} & \bs{\theta}^{T}\textbf{S}^{-1}\bs{\theta} 
\end{bmatrix},\; \textbf{L}_{\sset{A}} = \begin{bmatrix}
a^{-1}\text{diag}(\textbf{1}_{\sset{A}})  & \textbf{0}\\
\textbf{0} & \textbf{0}
\end{bmatrix}.
$$

We can express the cost set function from~\eqref{eq:subs2} as
$
f(\sset{A}) = \log\text{det}(\textbf{T} + \textbf{L}_{\sset{A}}),
$
where we have defined $\bm{M}_{\sset{A}}\coloneqq \bm{T} + \bm{L}_{\sset{A}}$. To prove the monotonicity of the set function we need to show
$$
f(\sset{A}\cup \{i\}) - f(\sset{A}) = \log\frac{\text{det}(\textbf{M}_\sset{A}+\bm{L}_{i})}{\text{det}(\textbf{M}_\sset{A})}.
$$
Therefore, we should prove that
$
\text{det}(\textbf{M}_\sset{A}+\bm{L}_{i}) \geq \text{det}(\textbf{M}_\sset{A}).
$
This condition is implied by
$
\textbf{M}_\sset{A} + \bm{L}_{i} \succeq \textbf{M}_\sset{A},
$
as $a\geq 0$.

\textbf{Submodularity} : Let us consider the previous definitions for $\bm T$ and $\bm L_{\sset{A}}$. We need to prove that the following expression is always positive
\begin{multline*}
f({\sset{A}\cup i}) - f(\sset{A}) - f({\sset{A}\cup \{i,j\}}) + f({\sset{A}\cup j}) =\\ \log\frac{\text{det}(\textbf{M}_\sset{A}+\textbf{L}_{i})\text{det}(\textbf{M}_\sset{A}+\textbf{L}_{j})}{\text{det}(\textbf{M}_\sset{A})\text{det}(\textbf{M}_\sset{A}+\textbf{L}_{i}+\textbf{L}_{j})} \geq 0
\end{multline*}
The above inequality is equivalent to
$$
\frac{\text{det}(\textbf{M}_\sset{A}+\textbf{L}_{i})\text{det}(\textbf{M}_\sset{A}+\textbf{L}_{j})}{\text{det}(\textbf{M}_\sset{A})\text{det}(\textbf{M}_\sset{A}+\textbf{L}_{i}+\textbf{L}_{j})} \geq 1
$$
Noticing that $\textbf{L}_{i} = a^{-1}\textbf{e}_{i}\textbf{e}_{i}^{T}$ is a dyadic product, and that $\textbf{M}_\sset{A}$ and $\textbf{M}_\sset{A} + \textbf{L}_{j}$ are invertible by definition, we can apply the matrix determinant lemma and rewrite the previous expression as
$$
\frac{\text{det}(\textbf{M}_\sset{A})\text{det}(\textbf{M}_\sset{A}+\textbf{L}_{j})(1 + a^{-1}\textbf{e}_{i}^{T}\textbf{M}_\sset{A}^{-1}\textbf{e}_{i})}{\text{det}(\textbf{M}_\sset{A})\textbf{det}(\textbf{M}_\sset{A}+\textbf{L}_{j})(1+a^{-1}\textbf{e}_{i}(\textbf{M}_\sset{A}+\textbf{L}_{j})^{-1}\textbf{e}_{i})} \geq 1,
$$
leading to
$$\frac{1 + a^{-1}\textbf{e}_{i}^{T}\textbf{M}_\sset{A}^{-1}\textbf{e}_{i}}{1+a^{-1}\textbf{e}_{i}(\textbf{M}_\sset{A}+\textbf{L}_{j})^{-1}\textbf{e}_{i}} \geq 1.
$$
Finally, the inequality for the last ratio can be proven using the following property of positive definite matrices. If $\textbf{M} \succeq \textbf{N}$, then $\textbf{M}^{-1} \preceq \textbf{N}^{-1}$. Hence,
$$
a^{-1}\textbf{e}_{i}^{T}(\textbf{M}_\sset{A}^{-1} - (\textbf{M}_\sset{A} + \textbf{L}_{i})^{-1})\textbf{e}_{i} \geq 0,
$$
which is always true for ${a}\geq 0$ and due to
$
\textbf{M}_\sset{A} + \textbf{L}_{i} \succeq \textbf{M}_\sset{A}.
$
\ifCLASSOPTIONcaptionsoff
  \newpage
\fi

%
%
%
%
%




\end{document}